\definecolor{LightCyan}{rgb}{0.88,1,1}
\newcommand{\Signal}{\mathbf{Signal}}
\newcommand{\Sub}{\mathrm{Sub}}
\newcommand{\AP}{\mathrm{AP}}
\newcommand{\Robust}[2]{\ensuremath\sem{#1, #2}\xspace}
\newcommand{\BoolSat}[2]{\ensuremath #1\models #2\xspace}
\newcommand{\R}{\mathbb{R}}
\newcommand{\dimNum}{\ensuremath{V}\xspace}
\newcommand{\UntilOp}[1]{\mathbin{\mathcal{U}_{#1}}}
\newcommand{\ReleaseOp}[1]{\mathbin{\mathcal{R}_{#1}}}
\newcommand{\Defeq}{\coloneqq}
\newcommand{\signalSymbol}{\ensuremath{\sigma}\xspace}
\newcommand{\sem}[1]{\llbracket{#1}\rrbracket}
\newcommand{\pow}{\wp}
\newcommand{\Defiff}{\Longleftrightarrow}
\newcommand{\Real}{\mathbb{R}}
\newcommand{\EReal}{\overline{\mathbb{R}}}
\newcommand{\front}{\mathrm{f}}
\newcommand{\rear}{\mathrm{r}}
\newcommand{\Carf}{\mathrm{Car}_{\front}}
\newcommand{\Carr}{\mathrm{Car}_{\rear}}
\newcommand{\MRNC}{\mathcal{M}_{\mathrm{RNC}}}
\newcommand{\MNAV}{\mathcal{M}_{\mathrm{NAV}}}
\newcommand{\hl}{\cellcolor{green!20}}
\newcommand{\myparagraph}[1]{\medskip\noindent{\bf #1.}}
\spnewtheorem{theorem}{Theorem}[section]{\bfseries}{\rmfamily}
\spnewtheorem{definition}[theorem]{Definition}{\bfseries}{\rmfamily}
\spnewtheorem{corollary}[theorem]{Corollary}{\bfseries}{\rmfamily}
\spnewtheorem{example}[theorem]{Example}{\bfseries}{\rmfamily}
\spnewtheorem{lemma}[theorem]{Lemma}{\bfseries}{\rmfamily}
\spnewtheorem{remark}[theorem]{Remark}{\bfseries}{\rmfamily}
\spnewtheorem{problem}[theorem]{Problem}{\bfseries}{\rmfamily}
\spnewtheorem{proposition}[theorem]{Proposition}{\bfseries}{\rmfamily}
\spnewtheorem{assumption}[theorem]{Assumption}{\bfseries}{\rmfamily}
\spnewtheorem{notation}[theorem]{Notation}{\bfseries}{\rmfamily}
\crefname{theorem}{Thm.}{Thms}
\crefname{definition}{Def.}{Defs}
\crefname{proposition}{Prop.}{Props}
\crefname{remark}{Rem.}{Remarks}
\crefname{lemma}{Lem.}{Lemmas}
\crefname{example}{Ex.}{Exs}
\crefname{problem}{Prob.}{Probs}
\crefname{proof}{Proof.}{Proofs}
\crefname{appendix}{Appendix}{Appendixes}
\crefname{algorithm}{Alg.}{Algs}
\crefname{figure}{Fig.}{Figs}
\Crefname{equation}{}{}
\newline\textbf{BEGIN: AUX-PROOF}\dotfill\newline}
\newline\textbf{END: AUX-PROOF}\dotfill\newline}
\begin{document}
\title{
    \texorpdfstring{Optimization-Based Model Checking and\\ Trace Synthesis for Complex STL Specifications\\ (Extended Version)}{Optimization-Based Model Checking and Trace Synthesis for Complex STL Specifications (Extended Version)}\thanks{The authors are supported by ERATO HASUO Metamathematics for Systems
Design Project (No. JPMJER1603), the  START Grant No.\ JPMJST2213, the ASPIRE grant No.\ JPMJAP2301, JST. S.S.\ is supported by KAKENHI No.\ 23KJ1011, JSPS. Z.Z.\ is supported by JSPS KAKENHI Grant No.\ JP23K16865 and No.\ JP23H03372.
}
}
\titlerunning{Optimization-Based Trace Synthesis for Complex STL Specifications}
%
\author{
    Sota Sato \inst{1,3}\,\orcidlink{0000-0001-7147-3989} \and
    Jie An \inst{1,4}\,\orcidlink{0000-0001-9260-9697} \and
    Zhenya Zhang \inst{2,1}\,\orcidlink{0000-0002-3854-9846} \and
    Ichiro Hasuo \inst{1,3}\,\orcidlink{0000-0002-8300-4650}
}
\institute{
        National Institute of Informatics, Tokyo, Japan \\
        \email{\{sotasato,jiean,hasuo\}@nii.ac.jp}
    \and
        Kyushu University, Fukuoka, Japan \\
        \email{zhang@ait.kyushu-u.ac.jp}
    \and SOKENDAI (The Graduate University for Advanced Studies), Tokyo, Japan
    \and
        Institute of Software, Chinese Academy of Sciences, Beijing, China \\
}
\authorrunning{
    S. Sato et al.
}
%

%
\maketitle              

\begin{abstract}
Techniques of light-weight formal methods, such as monitoring and falsification,  are attracting attention for quality assurance of cyber-physical systems. The techniques  require formal specs, however, and  writing  right specs
is still  a practical challenge. Commonly one relies on \emph{trace synthesis}---i.e.\ automatic generation of a signal that satisfies a given spec---to examine the meaning of a spec.
In this work, motivated by 1) complex STL specs from an automotive safety standard and 2) the struggle of existing tools in their trace synthesis, we introduce a novel trace synthesis algorithm for STL specs. It combines the use of MILP (inspired by works on controller synthesis) and a \emph{variable-interval encoding} of STL semantics (previously studied for SMT-based STL model checking). The algorithm solves model checking, too, as the dual of trace synthesis. Our experiments show that only ours has realistic performance needed for the interactive examination of STL specs by trace synthesis.

\end{abstract}


%
%
\section{Introduction}\label{sec:intro}
Safety and quality assurance of \emph{cyber-physical systems (CPSs)} is an important and multifaceted problem. The pervasiveness and safety-critical nature of CPSs makes the problem imminent and pressing; at the same time, the problem comes with very different flavors in different application domains, calling for  different solutions. For example, in the aerospace domain, full  formal verification all the way up from the codebase seems  feasible~\cite{SouyrisWDD09}. Such is a luxury that the automotive domain may not afford, however, because of short  product cycles, dependence on third-party (thus black-box) components, heterogeneous environmental uncertainties, and fierce competition (thus tight budget).

The above limitations in the automotive domain point, in the formal methods terms, to the \emph{absence of white-box system models}. This has led to the flourish of \emph{light-weight formal methods}, such as monitoring~\cite{BartocciDDFMNS18}, runtime verification, and hybrid system falsification~\cite{ErnstABCDFFG0KM21}. These are logic-based methods that operate on \emph{formal specifications}, often given in \emph{signal temporal logic (STL)}~\cite{MalerN04}.
These methods give up comprehensive guarantee due to the absence of white-box system models; yet their values in practical usage scenarios are widely acknowledged.

\myparagraph{Trace Synthesis and Model Checking}
In this paper, we are motivated by some automotive instances of  the \emph{trace synthesis} problem: it asks to synthesize an execution trace $\sigma$ of a system $\mathcal{M}$ that satisfies a given STL specification $\varphi$.
There are two major approaches to trace synthesis for CPSs.

One common approach is via \emph{hybrid system falsification}~\cite{ErnstABCDFFG0KM21}: here, we try  many input signals $\tau$  for $\mathcal{M}$, iteratively modifying them in the direction of satisfying $\varphi$; the quantitative \emph{robust semantics} of STL~\cite{fainekosRobustnessTemporalLogic2009} serves as an objective function that allows hill-climbing optimization. It is notable that the system model $\mathcal{M}$ can be \emph{black-box}: we do not need to know its internal working; it is enough to compute the execution trace $\mathcal{M}(\tau)$ under given input $\tau$. Falsification has  attracted a lot of interest especially in the automotive domain; see e.g.~\cite{ErnstABCDFFG0KM21}.

We take the other approach to trace synthesis, namely as the \emph{dual of the model checking problem}. Here model checking decides if, under \emph{any} input $\tau$, the execution trace   $\mathcal{M}(\tau)$ satisfies $\varphi $. Our choice of this approach may be puzzling---it requires a white-box model $\mathcal{M}$, but  it is rare in the automotive domain.


\myparagraph{Analyzing Specifications (Rather Than Models)}
Our choice of the model checking approach to trace synthesis comes from the following basic scope of the paper: \emph{we use trace synthesis to analyze the quality of  specifications (specs)}.

This is in stark contrast with  many falsification tools whose scope is analyzing \emph{models}. There, a model $\mathcal{M}$ is extensive and complex (typically a Simulink model of an actual product), and  counterexample traces are  used for ``debugging'' $\mathcal{M}$.

In this paper, instead, a model $\mathcal{M}$ is simple and white-box (it can even be the trivial model, where the input and output are the same), but a spec $\varphi$ tends to be complex. One typical usage scenario for our framework is when $\varphi$ is a \emph{normative rule}---such as a law, a traffic rule, or a property required in an international standard---in which case $\varphi$ is imposed on many different systems (e.g.\ different vehicle models). Then $\mathcal{M}$ should be a simple overapproximation of a variety of systems, rather than a detailed system model.

Another typical usage scenario of our framework  is an early ``requirement development'' phase of the \emph{V-model} of the automotive system design.
Here, engineers fix specs  that pin down the  later development efforts, in which those specs get refined and realized. They want to confirm that the specs are sensible (e.g.\ there is no mutual conflict) and faithful to their intentions. Since a system is yet to be developed, a system model $\mathcal{M}$ cannot be detailed.

\myparagraph{Motivating Example}
More specifically, the current work is motivated by the work~\cite{ReimannMHBCSWAHUY24_toAppear} on formalizing disturbance scenarios in the ISO 34502 standard for automated driving vehicles. There, a vehicle dynamics model is simple (the scenarios should apply to different vehicle models---see above), but STL formulas are complex. It is observed that existing algorithms have a hard time handling the complexity of specs (see \cref{sec:expr} for experiments). This motivated our current technical development, namely a trace synthesis algorithm that exploits \emph{white-box models} and \emph{MILP optimization} for efficiency.

The following example
illustrates
 the challenge encountered in~\cite{ReimannMHBCSWAHUY24_toAppear}.

\begin{wrapfigure}[5]{r}[0pt]{4.5cm}
    \centering

    \vspace{-2em}
    \includegraphics[width=3.5cm]{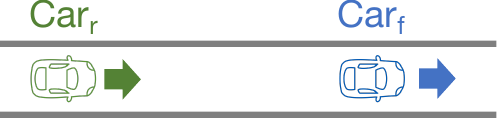}
    \caption{Rear-end near collision}
    \label{fig:oneway}
\end{wrapfigure}

\begin{example}[rear-end near collision]\label{ex:introLead}
    We would like to express, in STL, a \emph{rear-end near collision} scenario for two cars. It refers to those driving situations where a rear car $\Carr$ comes too close to a front car $\Carf$. We assume a single-lane setting (\cref{fig:oneway}), so we can ignore  lateral dynamics.

    Consider the following STL formulas. Here, $x_{\front}, v_{\front}, a_{\front}$ are the variables for the position, velocity, and acceleration of $\Carf$; the other variables are for $\Carr$.
    \begin{align}\label{eq:RNC1}
        \begin{array}{rl}
            \mathtt{danger}     & \quad:\equiv\quad x_{\front} - x_{\rear} \leq 10                                                                                                \\
            \mathtt{dyn\_{inv}} & \quad:\equiv\quad x_{\front} - x_{\rear} \ge 0\,\land\, 2 \leq v_{\front} \leq 27 \,\land\, 2 \leq v_{\rear} \leq 27
            \\
            \mathtt{trimming}   & \quad:\equiv\quad (\Diamond \mathtt{danger}) \Rightarrow \bigl((\Box_{[0, 0.2]} a_{\rear} \geq 0.5) \mathbin{\mathcal{U}} \mathtt{danger}\bigr)
            \\
            \mathtt{RNC1}       & \quad:\equiv\quad \Box(\mathtt{dyn\_inv}\land \mathtt{trimming}) \land \Diamond_{[0, 9]}\Box_{[0,1]}\mathtt{danger}
        \end{array}
    \end{align}
    The last formula $\mathtt{RNC1}$ formalizes rear-end near collision; in particular,  its subformula $\Diamond_{[0, 9]}\Box_{[0,1]}\mathtt{danger}$ requires that $\mathtt{danger}$ occurs within 9 seconds and persists for at least one second.

    The  formula $\mathtt{RNC1}$ comes with two auxiliary conditions: $\mathtt{dyn\_inv}$ and $\mathtt{trimming}$. We shall now exhibit their content and why they are needed. In fact, these conditions arose naturally in the course of
    \emph{trace synthesis}, the problem of our focus.

    Specifically, in~\cite{ReimannMHBCSWAHUY24_toAppear}, we conducted  trace synthesis repeatedly in order  to 1) \emph{illustrate} the meaning of STL specifications and 2) \emph{confirm} that they reflect informal intentions. The generated traces were  animated for graphical illustration. This workflow is much like in the tool STLInspector~\cite{RoehmHM17}.

    The formula $\mathtt{dyn\_inv}$ imposes basic constraints on the dynamics of the cars.
    In the trace synthesis in~\cite{ReimannMHBCSWAHUY24_toAppear},  without this basic  constraint, we obtained a number of nonsensical example traces in which a car warps and instantly passes the other, drives much faster than the legal maximum, and so on.

    The formula $\mathtt{trimming}$
    requires  $\Carr$ to accelerate until  $\mathtt{danger}$ occurs. It was added to limit a generated trace to an interesting part. For example, a trace can have $\mathtt{danger}$ only after a $8$-second pacific journey; animating this whole trace can easily bore users. The condition
    trims such a trace to the part where $\Carr$ is accelerating towards $\mathtt{danger}$.







    The dynamics model  used
    in~\cite{ReimannMHBCSWAHUY24_toAppear} is the following simple one:
    \begin{equation}\label{eq:introLeadModel}
        \dot{x_{\front}}=v_{\front}, \;
        \dot{v_{\front}}=a_{\front}; \qquad
        \dot{x_{\rear}}=v_{\rear}, \;
        \dot{v_{\rear}}=a_{\rear}. \;
    \end{equation}
    This  relates $x,v$ and $a$ in the spec~\cref{eq:RNC1}. The double integrator model is certainly simplistic, but it suffices the purpose in~\cite{ReimannMHBCSWAHUY24_toAppear} of illustrating and confirming specs.
\end{example}

\begin{remark}
    In~\cite{ReimannMHBCSWAHUY24_toAppear}, after illustrating and confirming STL specs through trace synthesis, the final goal was to use them for monitoring actual driving data. Neither the dynamics model \cref{eq:introLeadModel} nor the condition $\mathtt{dyn\_inv}$ is really relevant to monitoring---actual driving data should comply with them anyway. In contrast, $\mathtt{trimming}$ is important, in order to extract only relevant parts of the data.
\end{remark}

\myparagraph{Technical Solution: MILP-Based Trace Synthesis}
We present a novel  trace synthesis algorithm.  Note that it  also solves the dual problem, namely STL model checking. It originates from two recent lines of work:  MILP-based optimal control~\cite{ramanReactiveSynthesisSignal2015,RamanDMMSS14,donzeBluSTLControllerSynthesis} and SMT-based STL model checking~\cite{baeBoundedModelChecking2019,leeEfficientSMTBasedModel2021,YuLB22}.

The controller synthesis techniques in~\cite{ramanReactiveSynthesisSignal2015,RamanDMMSS14,donzeBluSTLControllerSynthesis} exploit \emph{mixed-integer linear programming (MILP)} for  efficiency. The optimal control problem that they solve can be specialized to our trace synthesis problem (detailed discussions come later). But we found  their capability of handling complex specs (as in \cref{ex:introLead})  limited, largely because of their \emph{constant-interval encoding} to MILP.

We solve this challenge by our novel \emph{variable-interval encoding} of the STL semantics to MILP. It is inspired by the \emph{stable partitioning} technique introduced in~\cite{baeBoundedModelChecking2019}: the technique is used in~\cite{baeBoundedModelChecking2019,leeEfficientSMTBasedModel2021,YuLB22}  for \emph{logical} encoding  towards SMT-based model checking; we use it for \emph{numerical} encoding to MILP.
This way we will solve the \emph{bounded} trace synthesis problem---in the sense that variability of the truth values of the relevant formulas is bounded---much like in~\cite{baeBoundedModelChecking2019,leeEfficientSMTBasedModel2021,YuLB22}.
For our MILP encoding, however, we need special care since MILP does not accommodate strict inequalities (partitions  such as
$\dotsc, (\gamma_{i-1},\gamma_{i}),\{\gamma_{i}\}, (\gamma_{i},\gamma_{i+1}), \dotsc$
in~\cite{baeBoundedModelChecking2019} cannot be expressed). We therefore use a novel technique called \emph{$\delta$-stable partitioning}.

Overall, our algorithm works as follows. We assume that a system model $\mathcal{M}$ can be MILP-encoded, either exactly or approximately. Some model families are discussed in \cref{sec:systemModels}. This assumption, combined with our key technique of \emph{variable-interval MILP encoding} of STL, reduces trace synthesis to an MILP problem, which we solve by Gurobi Optimizer~\cite{gurobi}. We conduct experimental evaluation to confirm the scalability of our algorithm, especially for complex specs (\cref{sec:expr}).

Our algorithm is \emph{anytime} (i.e.\ \emph{interruptible}): even if the budget runs out in the course of optimization, a best-effort result (the trace that is the closest to a solution so far) is obtained. A similar benefit is there in case there is no execution trace $\sigma$ that satisfies the spec $\varphi$: we obtain a trace $\sigma'$ that is the closest to satisfy $\varphi$. Accommodation of \emph{parameters} is another advantage thanks to our use of MILP; we exploit it for \emph{parameter mining} for PSTL formulas. See~\cref{sec:problemSetting}.

Both  controller synthesis techniques~\cite{ramanReactiveSynthesisSignal2015,RamanDMMSS14,donzeBluSTLControllerSynthesis} and SMT-based model checking techniques~\cite{baeBoundedModelChecking2019,leeEfficientSMTBasedModel2021,YuLB22} can be used for trace synthesis. The methodological differences are discussed later in \cref{sec:intro}; experimental comparison is made in \cref{sec:expr}.

\myparagraph{Contributions and Organization} We summarize our contributions.
\begin{itemize}
    \item We introduce an optimization-based  algorithm for bounded trace synthesis for STL specs. It assumes that a system model is white-box and  MILP-encodable; it also solves the dual problem (namely bounded model checking).
    \item As a key element, we introduce  a \emph{variable-interval} encoding of STL to MILP.
    \item  MILP encodings of some system models, notably rectangular hybrid automata and double integrator dynamics (suited for the automotive domain).
    \item We experimentally confirm scalability of our algorithm, especially for complex specs. Comparison is made with  MILP-based optimal control~\cite{donzeBluSTLControllerSynthesis}, SMT-based model checking~\cite{YuLB22}, and optimization-based falsification~\cite{Donze10,ZhangLAMHZ21}.

    \item Through the algorithm, case studies and experiments, we argue for the importance and feasibility of \emph{spec analysis} for CPSs.

\end{itemize}
After exhibiting preliminaries on STL
and stable partitioning
in \cref{sec:prelim}, we formulate our problems (bounded trace synthesis, model checking, etc.) in \cref{sec:problemSetting}.
In \cref{sec:encodeSTL} we present a novel \emph{variable-interval MILP encoding} of STL;
in \cref{sec:systemModels} we discuss MILP encoding of a few families of models.
Our main algorithm  combines these two encodings. In \cref{sec:expr} we present experiment results.

\myparagraph{Related Work I: Optimal STL Control with MILP}
The works~\cite{ramanReactiveSynthesisSignal2015,RamanDMMSS14,donzeBluSTLControllerSynthesis} inspire our use of MILP for STL.  Their problem is \emph{optimal controller synthesis under STL constraints}, i.e.\ to find an input signal $\tau$ to a system model $\mathcal{M}$
so that 1) the output signal $\mathcal{M}(\tau)$ satisfies a given STL spec $\varphi$ and 2) it optimizes  $J(\mathcal{M}(\tau))$, where $J$ is a given objective function.  This problem subsumes our problem of trace synthesis, by taking a constant function as $J$.

The algorithms in~\cite{ramanReactiveSynthesisSignal2015,RamanDMMSS14,donzeBluSTLControllerSynthesis} reduce their problem to MILP by a \emph{constant-interval encoding} of the robust semantics~\cite{donzeRobustSatisfactionTemporal2010,fainekosRobustnessTemporalLogic2009} of STL (an enhanced encoding is presented in~\cite{KurtzL22}). Specifically, their system model  is discrete-time dynamics $x(t+\varDelta t)=f_{\mathrm{d}}(x(t), u(t), w(t))$ with a constant interval $\varDelta t$.

In contrast, in our \emph{variable-interval} encoding (\cref{sec:encodeSTL}), continuous time is discretized into the intervals $\dotsc, (\gamma_{i-1},\gamma_{i}),\{\gamma_{i}\}, (\gamma_{i},\gamma_{i+1}), \dotsc$ where the end points $\gamma_{i}$ are also variables in MILP. This is  advantageous not only in modeling precision but also in scalability: for system models that are largely continuous, constant-interval discretization incurs more integer variables in MILP, hampering the performance of MILP solvers. See \cref{sec:expr} for experimental comparison.

\myparagraph{Related Work II: SMT-Based STL Model Checking} Our key technical element (a variable-interval MILP encoding of STL) uses the idea of stable partitioning  from~\cite{baeBoundedModelChecking2019,leeEfficientSMTBasedModel2021,YuLB22}. They solve bounded STL model checking, and also its dual (trace synthesis). The main difference is the class of system models $\mathcal{M}$ accommodated. SMT solvers accommodate more theories than MILP solving, and thus allows encoding of a greater class of models. In contrast, by restricting the model class to MILP-encodable, our algorithm benefits speed and scalability (MILP is faster than SMT).   Iterative optimization in MILP also makes our algorithm an anytime one. Native support of parameter synthesis is another plus.

\myparagraph{Other Related Work} \emph{Optimization-based falsification} has its root in the quantitative robust semantics of STL~\cite{donzeRobustSatisfactionTemporal2010,fainekosRobustnessTemporalLogic2009}; the successful combination with stochastic optimization metaheuristics has made falsification an approach of both scientific and industrial interest. See the ARCH competition report~\cite{ErnstABCDFFG0KM21} for state-of-the-art.  Falsification is most of the time thought of as \emph{search-based testing}; therefore, unlike the model checking approach, the absence of counterexamples is usually not proved. Exceptions are~\cite{PedrielliKCTHCF23,ZhangA21} where they strive for probabilistic guarantees for such absence.

The current work is motivated by the observation that falsification solvers often struggle in trace synthesis for complex STL specs, even if a system model is simple. It is known that specs with more connectives pose a performance challenge, and many countermeasures are proposed, including~\cite{AkazakiH15} (for temporal operators) and~\cite{ZhangLAMHZ21,ZhangHA19} (for Boolean connectives).

%
%

\section{Preliminaries}\label{sec:prelim}
We let $\mathbb{N}, \mathbb{R}$ denote the sets of natural numbers and reals, respectively; $\mathbb{R}_{\ge 0}$ denotes an obvious subset. The set $\EReal=\Real\cup\{-\infty,\infty\}$ is that of extended reals. The set $\mathbb{B} = \{\top, \bot\}$ is for Boolean truth values.
The \emph{powerset} of a set $X$ is denoted by $\wp(X)$.
An \emph{interval} is a subset of $\mathbb{R}_{\ge 0}$ of the form $(a,b)$, $[a,b)$, $(a,b]$, or $[a,c]$, where $a<b$ and $a\le c$. Therefore a singleton $\{a\}$ is an interval.

\begin{definition}[linear predicate $p$ and $\sem{p}, \pi_{p}$]\label{def:linPred}
    Given a set $V$ of variables, a  \emph{(closed) linear predicate} is a function  $p\colon \mathbb{R}^V \to \mathbb{B}$ defined as follows, using  some $c \in \mathbb{R}^V$ and $b \in \mathbb{R}$:
    $p(x) = \top$ if and only if $c^{\top}x + b \geq 0$.
We write $\sem{p}$ for the closed half-space  $\{ x \mid p(x) = \top \} \subseteq \mathbb{R}^V$.

For the above $p$, we define a function $\pi_p(x)\colon \mathbb{R}^V \to \mathbb{R}$   by $\pi_{p}(x) \Defeq c^{\top}x + b$. This is understood as the degree of satisfaction (or violation, if negative) of a linear predicate $p$ by $x\in\mathbb{R}^V$. Indeed,   $\pi_{p}(x)$ is the (signed) Euclidean distance to the boundary of $\sem{p}$, assuming that the Euclidean norm of $c$ is $\lVert c \rVert = 1$.
\end{definition}

\begin{definition}[signal]\label{def:signal}
Let $V$ be a finite set of variables and $T$ a positive real.
A \emph{signal} over $V$ with a time horizon $T$ is a function $\sigma: [0, T] \to \mathbb{R}^V$.
We write $\Signal_V^T$ for the set of all signals over $V$ with time horizon $T$, or simply $\Signal_V$ when $T$ is clear from the context.

If necessary, the domain $[0, T]$ of $\signalSymbol$ can be extended to $\R_{\geq 0}$ by setting $\signalSymbol(t) \coloneqq \signalSymbol(T)$ for all $t > T$.
This allows us to define the notion of \emph{$t$-postfix}, which will serve as the basis of the STL semantics (\cref{sec:signalTemporalLogic}). Precisely,
the \emph{$t$-postfix} of $\sigma$ is a signal $\signalSymbol^t$
defined by $\signalSymbol^t(t') \Defeq \signalSymbol(t+t')$. The domain of $\signalSymbol^t$ can be chosen freely but we set it to $[0,T]$ for consistency.
\end{definition}

\begin{definition}[system model, trace set $\mathcal{L}(\mathcal{M})$]\label{def:sysModelTraceSet}
Let $V,V'$ be finite sets of variables.
A \emph{system model} $\mathcal{M}$ from  $V'$ to  $V$ with a time horizon $T$ is a function $\mathcal{M}\colon \Signal^T_{V'}\to\pow(\Signal_{V}^T)$.
The \emph{trace set} $\mathcal{L}(\mathcal{M})$ of a system model $\mathcal{M}$ is
\begin{math}
\mathcal{L}(\mathcal{M}) \Defeq \textstyle\bigcup_{\tau\in\Signal_{V'}^T}\mathcal{M}(\tau),
\end{math}
that is, the set of all output signals of $\mathcal{M}$ where an input signal $\tau$ can vary.
\end{definition}

\noindent
We allow system models to be nondeterministic (note the the powerset construction $\pow$); the models in \cref{sec:intro}  were deterministic for simplicity.
A special case of the above is when $V'=\emptyset$, that is, when $\mathcal{M}$ does not have any input. In this case,  $\Signal_{V'}$ is a singleton, and therefore a function $\mathcal{M}$ can be identified with a subset $\mathcal{L}(\mathcal{M})\subseteq\Signal_{V}$.


    \begin{example}[$\mathcal{M}_{\mathrm{RNC}}$]\label{ex:introLeadModel}
        The dynamics model in \cref{ex:introLead} is formalized as a system model $\MRNC$ whose input variables (in $V'$) are $a_{\front}, v^{\mathrm{init}}_{\front},x^{\mathrm{init}}_{\front},
            a_{\rear}, v^{\mathrm{init}}_{\rear},x^{\mathrm{init}}_{\rear}
        $, and output variables (in $V$) are  $a_{\front}, v_{\front},x_{\front},
            a_{\rear}, v_{\rear},x_{\rear}
        $.
        Here, the input is acceleration rates ($a_{\front},a_{\rear}$) and the initial values of velocities and positions (modeled using signals $v^{\mathrm{init}}_{\front}$ etc.\ for convenience).
The time horizon $T$ of $\mathcal{M}$ represents its simulation time; here we set $T=20$.
        Given an input signal $\tau$, the output  $\mathcal{M}(\tau)$ is a singleton $\mathcal{M}(\tau)=\{\sigma\}$, and $\sigma$ is determined by the ODE~\cref{eq:introLeadModel}. Specifically, $\sigma(t)(a_{\front})=\tau(t)(a_{\front})$, $\sigma(t)(v_{\front})=\tau(0)(v^{\mathrm{init}}_{\front})+\int_{0}^{t}\tau(t')(a_{\front})\,\mathrm{d}t'$, and so on.
    \end{example}

\subsection{Signal Temporal Logic}\label{sec:signalTemporalLogic}
    \begin{definition}[signal temporal logic (STL)]\label{def:stlSyntax}
In STL, an \emph{atomic proposition} over a variable set $V$ is represented as $p:\equiv (f(\vec{w}) \geq 0)$, where $f: \R^{\dimNum}\to\R$ is a function that maps a $\dimNum$-dimensional vector $\vec{w}$ to a real. The syntax of an STL formula $\varphi$ (over $V$) is defined as follows:
        \begin{math}
\varphi :\equiv p \mid \bot\mid
            \top\mid
            \neg\varphi \mid
            \varphi_1\lor \varphi_2\mid
            \varphi_1\wedge \varphi_2\mid
            \Diamond_I\varphi\mid
            \Box_I\varphi \mid
            \varphi_1 \UntilOp{I} \varphi_2 \mid \varphi_1 \ReleaseOp{I} \varphi_2
        \end{math},
        where $I$ is a nonsingular closed time interval, and $\Diamond_{I}$, $\Box_{I}, \UntilOp{I}$, $\ReleaseOp{I}$  are temporal operators \emph{eventually}, \emph{always}, \emph{until} and \emph{release}. Implication is defined: $\varphi_1 \Rightarrow \varphi_2 :\equiv \neg\varphi_1 \lor \varphi_2$.
We  write temporal operators without the subscript $I$ when $I=[0,\infty]$ (e.g., $\Diamond_{}$).
Note that we do not lose generality by restricting the inequality in  $p :\equiv (f(\vec{w}) \geq  0)$. Indeed, $\leq, <, >$ can be encoded using (a combination of) $-f$ and $\lnot$.

The set $\Sub(\varphi)$ collects all subformulas of an STL formula $\varphi$; the set $\AP(\varphi)$ collects all atomic propositions $\alpha$ occurring in $\varphi$.
    \end{definition}

\begin{proposition}\label{prop:NNF}
    Every STL formula has a formula in the \emph{negation normal form (NNF)}---i.e.\ one in which  negation $\lnot$ appears only in front of atomic propositions---that is semantically equivalent. \qed
\end{proposition}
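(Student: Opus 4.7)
The plan is to prove the proposition by exhibiting a structural rewriting procedure $\mathrm{nnf}(\cdot)$ and showing by induction on formula size that $\mathrm{nnf}(\varphi)$ is in NNF and semantically equivalent to $\varphi$. Concretely, I would introduce two mutually recursive maps $\mathrm{nnf}^{+}, \mathrm{nnf}^{-}$ with the intended meanings $\mathrm{nnf}^{+}(\varphi) \equiv \varphi$ and $\mathrm{nnf}^{-}(\varphi) \equiv \neg\varphi$, designed so that neither outputs a negation in front of anything other than an atomic proposition. The NNF of $\varphi$ is then defined as $\mathrm{nnf}^{+}(\varphi)$.

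The clauses of $\mathrm{nnf}^{+}$ just recurse into subformulas, with $\mathrm{nnf}^{+}(\neg\psi) \coloneqq \mathrm{nnf}^{-}(\psi)$ being the only nontrivial case. The clauses of $\mathrm{nnf}^{-}$ apply the standard De Morgan and temporal dualities,
\begin{align*}
\mathrm{nnf}^{-}(\varphi_1 \vee \varphi_2) &\coloneqq \mathrm{nnf}^{-}(\varphi_1) \wedge \mathrm{nnf}^{-}(\varphi_2), \\
\mathrm{nnf}^{-}(\Box_I \psi) &\coloneqq \Diamond_I \mathrm{nnf}^{-}(\psi), \\
\mathrm{nnf}^{-}(\varphi_1 \UntilOp{I} \varphi_2) &\coloneqq \mathrm{nnf}^{-}(\varphi_1) \ReleaseOp{I} \mathrm{nnf}^{-}(\varphi_2),
\end{align*}
together with symmetric clauses for the dual operators and base cases $\mathrm{nnf}^{-}(p) \coloneqq \neg p$, $\mathrm{nnf}^{-}(\neg\psi) \coloneqq \mathrm{nnf}^{+}(\psi)$, $\mathrm{nnf}^{-}(\top) \coloneqq \bot$, $\mathrm{nnf}^{-}(\bot) \coloneqq \top$. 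The syntax of \cref{def:stlSyntax} is closed under every dual operator invoked above, so the procedure stays within STL; termination is immediate from a lexicographic measure on (formula size, number of leading negations).

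The two equivalences $\mathrm{nnf}^{+}(\varphi) \equiv \varphi$ and $\mathrm{nnf}^{-}(\varphi) \equiv \neg\varphi$ are then proved simultaneously by induction on $\varphi$. The Boolean cases are routine De Morgan and double-negation calculations, while the temporal cases invoke the pointwise dualities $\neg\Box_I\psi \equiv \Diamond_I\neg\psi$ and $\neg(\varphi_1 \UntilOp{I} \varphi_2) \equiv \neg\varphi_1 \ReleaseOp{I} \neg\varphi_2$ of the standard STL semantics.

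The main obstacle is the until/release duality. It is entirely standard in LTL, but in the continuous-time STL setting one has to confirm that $\ReleaseOp{I}$ is defined so as to be the genuine semantic dual of $\UntilOp{I}$---in particular, that the endpoints of the nonsingular closed interval $I$ and the quantification over $t$-postfixes from \cref{def:signal} are treated symmetrically on both sides of the equivalence. Once this is checked against the (standard) semantics of STL, each inductive step reduces to a direct calculation with no surprises.
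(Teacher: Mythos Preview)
Your proposal is correct and follows the standard route. The paper does not actually prove this proposition at all: it is stated with a bare \qed{} and treated as folklore, so your sketch is strictly more detailed than what the paper provides. The one point worth double-checking against the paper's concrete semantics in the appendix is the until/release duality you flag as the main obstacle; with the definitions given there ($\sigma\models\varphi_1\ReleaseOp{I}\varphi_2$ iff for all $t\in I$, $\sigma^t\not\models\varphi_2$ implies $\exists t'\in[0,t).\,\sigma^{t'}\models\varphi_1$), the equivalence $\neg(\varphi_1\UntilOp{I}\varphi_2)\equiv(\neg\varphi_1)\ReleaseOp{I}(\neg\varphi_2)$ goes through by direct calculation, so no gap remains.
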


    \begin{assumption}\label{asmp:onlyLinearPredicates}
We assume that each atomic proposition $p$  is a linear predicate (\cref{def:linPred}), that is, $f(x)=c^\top x+b$ with some $c\in\mathbb{R}^{V}, b\in\mathbb{R}$ in each $p :\equiv (f(\vec{w}) \geq  0)$.
    \end{assumption}

The Boolean semantics $\sigma\models \varphi$ and robust semantics $\sem{\sigma,\varphi}\in \EReal$ of STL are standard. See~\cref{sec:appendix}.

\emph{PSTL} is a parametric extension of STL.
It is from~\cite{AsarinDMN11}; see also~\cite{BartocciMNN22}.  Its definition is in~\cref{sec:appendix}.
The semantics of PSTL formula is defined naturally by fixing $\vec{u},\vec{v}$; see \cref{prob:existParMining} for the specific forms we use.

\subsection{Finite Variability}\label{sec:finiteVariability}
The satisfiability checking problem for STL---this is equivalent to the model checking problem under the trivial (identity) system model---is already  EXPSPACE-complete~\cite{alurBenefitsRelaxingPunctuality}.
To ease computational complexity, \emph{bounded model checking} has been a common approach~\cite{leeEfficientSMTBasedModel2021,prabhakarAutomaticTraceGeneration2018}.
Its main idea is to bound the number of time-points at which the truth value of each subformula can vary.

A (finite) \emph{partition} $\mathcal{P}$ of an interval $D \subseteq \mathbb{R}$ is a sequence  $\mathcal{P} = (J_{i})_{i = 1}^{N}$ of nonempty and mutually disjoint intervals such that $\bigcup_{i=1}^N J_{i} = D$, and $\sup(J_{i}) \leq \inf(J_{i'})$ for any $i < i'$.
    \begin{definition}[finite variability~\cite{rabinovichDecidabilityContinuousTime1998a}]\label{def:finiteVariability}
        A Boolean signal  $q\colon \mathbb{R}_{\geq 0} \to \mathbb{B} $ is
\emph{constant} on an interval $J  \subseteq \mathbb{R}_{\ge 0}$ if $q(t) = q(t')$  for any $t, t' \in J$.  
We say  $q(t)$ has \emph{$N$-bounded variability} if there exists a partition $\mathcal{P}$ of $[0, \infty)$ and $q(t)$ is constant on every interval $J \in \mathcal{P}$.

Let $\sigma\colon [0, T] \to \mathbb{R}^V$ be a signal and $\varphi$ be an STL formula over $V$. We say that $\sigma$ has the \emph{$N$-bounded variability} with respect to $\varphi$ if the Boolean (truth value) signal $t \mapsto (\sigma^t \models \varphi)$ has the  $N$-bounded variability.
We say $\sigma$ is \emph{finitely variable} with respect to $\varphi$ if it has the $N$-bounded variability for some $N$.

        Finally, we say $\sigma$ has the \emph{hereditary $N$-bounded variability} with respect to $\varphi$ if, for each subformula $\psi\in\Sub(\varphi)$, $\sigma$ has the $N$-bounded variability with respect to $\psi$. We write \emph{$N$-HBV} for the hereditary $N$-bounded variability.
\end{definition}

    \begin{lemma}[\!\!{\cite{baeBoundedModelChecking2019}}]\label{lem:finiteVariabilitySTL}
        Let $\varphi$ be an STL formula.
A signal $\sigma$ has the $N$-HBV with respect to $\varphi$ for some $N$ if and only if it is finitely variable with respect to each atomic proposition $p\in\AP(\varphi)$ occurring in $\varphi$.
        \qed
    \end{lemma}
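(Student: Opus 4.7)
The plan is to prove the two directions separately. The forward implication falls out of the definitions immediately, so the real content sits in the backward implication, which I would establish by structural induction on $\varphi$ after first putting it in NNF via \cref{prop:NNF}.

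\emph{Forward direction ($\Rightarrow$).} By \cref{def:finiteVariability}, $N$-HBV with respect to $\varphi$ says exactly that for every $\psi\in\Sub(\varphi)$ the Boolean signal $t\mapsto(\sigma^{t}\models\psi)$ has $N$-bounded variability. Since every $p\in\AP(\varphi)$ is itself in $\Sub(\varphi)$, $\sigma$ is in particular $N$-bounded variable with respect to each such $p$, hence finitely variable with respect to each $p$.

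\emph{Backward direction ($\Leftarrow$).} I first replace $\varphi$ by a semantically equivalent formula in NNF using \cref{prop:NNF}; this leaves $\AP(\varphi)$ untouched and pushes all negations in front of atomic propositions. I then prove by induction on formula structure that finite variability with respect to each $p\in\AP(\varphi)$ propagates to $N$-HBV with respect to $\varphi$ for some $N$ depending on $\varphi$. The base cases $\top$, $\bot$, $p$, $\neg p$ are immediate: the truth value signal of $\neg p$ changes exactly where that of $p$ does, so both are finitely variable by hypothesis. For the Boolean connectives $\land$ and $\lor$, I take the common refinement of the two partitions witnessing the $N_{i}$-BV of the two arguments $\psi_{1},\psi_{2}$; this refinement has at most $N_{1}+N_{2}-1$ cells, and the truth value of the combined formula is constant on each cell.

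The temporal operators are where the real work lies. The key observation is that the truth value of $\Diamond_{I}\psi$, $\Box_{I}\psi$, $\psi_{1}\UntilOp{I}\psi_{2}$, or $\psi_{1}\ReleaseOp{I}\psi_{2}$ at time $t$ is a fixed Boolean function of the restrictions of its subformulas' truth value signals to the window $t+I$; consequently a change in truth value at time $t$ can only occur when some breakpoint of a subformula's truth value signal enters or leaves this window, i.e.\ when $t+a$ or $t+b$ (for $I=[a,b]$) coincides with one of the finitely many breakpoints of $\psi_{1}$ or $\psi_{2}$. This yields $O(N_{1}+N_{2})$ new breakpoints per operator; for the unbounded case $I=[a,\infty]$ only the left shift contributes, and the eventual ``tail'' constancy of each $\psi_{i}$ on the finite horizon handles the rest.

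The main obstacle I expect is the clean accounting for $\UntilOp{I}$ and $\ReleaseOp{I}$, whose semantics quantify over \emph{pairs} of time points in the window rather than a single one. The standard way around this is to note that $(\sigma^{t}\models\psi_{1}\UntilOp{I}\psi_{2})$ is determined by the \emph{first} time in $t+I$ at which $\psi_{2}$ holds together with the behavior of $\psi_{1}$ on the initial segment up to that time; as $t$ varies, this ``first-witness'' time is a piecewise-constant (or piecewise-linear shift) function of $t$ whose own breakpoints are again shifts of breakpoints of $\psi_{1},\psi_{2}$. Tracing these transitions back to the partitions of $\psi_{1},\psi_{2}$ keeps the bound finite (explicitly polynomial in $N_{1},N_{2}$ per operator, hence an exponential-in-$|\varphi|$ global bound on $N$), which is all the statement requires.
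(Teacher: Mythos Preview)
The paper does not prove this lemma: it is attributed to \cite{baeBoundedModelChecking2019} and closed immediately with \qed, so there is no in-paper argument to compare against. Your sketch is the standard structural-induction proof and is correct in outline; the forward direction is exactly as you say, and for the backward direction the common-refinement step for Boolean connectives and the shifted-breakpoint count for temporal operators are the right ideas.

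One imprecision worth fixing before you write it out in full: for $\psi_{1}\UntilOp{I}\psi_{2}$ with $I=[a,b]$, the truth value at $t$ is \emph{not} a function of the restrictions of both subformulas to the window $t+I$ alone. By \cref{def:STLBoolSem} the witness $t'\in I$ requires $\psi_{1}$ to hold on all of $[t,t+t')$, so the relevant range for $\psi_{1}$ is $[t,t+b]$, not $[t+a,t+b]$; dually for $\ReleaseOp{I}$. Your later ``first-witness'' paragraph partly acknowledges this, but the earlier blanket sentence about the window $t+I$ is wrong as stated (and the first-witness description itself is a slight oversimplification, since the earliest $\psi_{2}$-witness may be unusable if $\psi_{1}$ already failed before it). None of this damages the conclusion: once you track the correct windows, the truth value of the compound formula can change only at points of the form $\gamma$, $\gamma-a$, or $\gamma-b$ for $\gamma$ a breakpoint of $\psi_{1}$ or $\psi_{2}$, which is a finite set, and the induction closes with the exponential-in-$|\varphi|$ bound on $N$ you state.
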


    The following is the basis of bounded model checking in~\cite{baeBoundedModelChecking2019,leeEfficientSMTBasedModel2021}.
    \begin{definition}[stable partition]\label{def:stablePartitioning}
        Let $\sigma$ be a signal, $\varphi$ be an STL formula, and $\mathcal{P}$ be a partition of $[0, T]$ such that every $J \in \mathcal{P}$ is singular or open.
        Intuitively, $\mathcal{P}$ looks like $\{\gamma_0\},(\gamma_{0},\gamma_{1}),\{\gamma_{1}\},(\gamma_{1},\gamma_{2}),\dotsc, \{\gamma_N\}$.
        We say $\mathcal{P}$ is a \emph{stable partition} for $\sigma$ and $\varphi$ if $t \mapsto \sigma^t \models \psi$ is constant on $J$ for each  $J \in \mathcal{P}$ and  $\psi \in \Sub(\varphi)$.
    \end{definition}


\section{Problem Formulation}\label{sec:problemSetting}
We formulate our problems and discuss their mutual relationship.
The next problem is studied in~\cite{baeBoundedModelChecking2019,leeEfficientSMTBasedModel2021,YuLB22}.

\begin{problem}[bounded STL model checking]\label{prob:boundedModelChecking}
Given an STL formula $\varphi$ (over $V$), a system model $\mathcal{M}$ (from $V'$ to $V$) with time horizon $T$, and a variability bound $N\in\mathbb{N}$, decide if the following is true or not:
$\sigma \models \varphi$ holds
for an arbitrary trace $\sigma\in\mathcal{L}(\mathcal{M})$ (cf.\ \cref{def:sysModelTraceSet}) that has the
hereditary $N$-bounded variability
($N$-HBV)
with respect to $\varphi$.
%
\end{problem}


The following  is the dual of \cref{prob:boundedModelChecking}, and is our main scope.

    \begin{problem}[bounded STL trace synthesis]\label{prob:traceSynthesis}
    Given $\varphi,\mathcal{M},T$ and $N$  as in \cref{prob:boundedModelChecking}, find a trace
    $\sigma \in \mathcal{L}(\mathcal{M})$  such that 1) $\sigma$ has the
    $N$-HBV with respect to $\varphi$
and 2)  $\sigma\models \varphi$ holds, or prove that such $\sigma$ does not exist.
    \end{problem}




\cref{prob:traceSynthesis} resembles
the \emph{falsification problem}~\cite{fainekosRobustnessTemporalLogic2009}: given  $\mathcal{M}$ (that can be black-box) and $\varphi'$, find a \emph{counterexample input} $\tau$ such that $\mathcal{M}(\tau)\not\models \varphi'$.  The emphases and the settings are often different though; see \cref{sec:intro}.


\noindent
The following is a special case of the \emph{STL parameter mining} problem; see e.g.~\cite[\S{}3.5]{BartocciMNN22}.
Recall from~\cite[Def. A.3]{satoCAV24Extended} that  $\varphi_{\vec{u},\vec{v}}$ instantiates parameters $\vec{p},\vec{q}$ in $\varphi$ with real values $\vec{u},\vec{v}$ from the domains $P,Q$, respectively.

    \begin{problem}[bounded existential parameter mining]\label{prob:existParMining}
    Let $\varphi$ be a PSTL formula over  parameters $(\vec{p},\vec{q})$, and  $\mathcal{M},T$ and $N$  be as in \cref{prob:boundedModelChecking}. Find the set
    \begin{math}
\bigl\{\,(\vec{u},\vec{v})\in P\times Q\,\big|\, \sigma\models\varphi_{\vec{u},\vec{v}} \text{ for some $\sigma\in\mathcal{L}(\mathcal{M})$ that has the
            $N$-HBV wrt.\ $\varphi$
        }\,\bigr\}.
    \end{math}
    \end{problem}
\noindent    In \cref{sec:expr}, we study a further special case where there is only one parameter $p$ and the goal is to find the maximum  $p$ in the above set.

\section{Variable-Interval Encoding of STL to MILP}\label{sec:encodeSTL}
\begin{auxproof}
    We present our novel variable-interval encoding of STL to MILP. Our optimization-based trace synthesis/model checking algorithm hinges on it.
\end{auxproof}




\subsection{$\delta$-Stable Partitions}\label{sec:deltaStablePartitioning}
We shall adapt the idea of stable partitioning~\cite{baeBoundedModelChecking2019}, reviewed in~\cref{sec:finiteVariability}, to the current MILP setting. A major difference we need to address is that SMT is symbolic while MILP is numerical: most MILP solvers
do not distinguish $<$ from $\le$ and do not accommodate strict inequalities.
See e.g.~\cite{gurobi}.

In order to address this difference,
we  develop a theory
of \emph{$\delta$-stable partitions}. Here is its outline.
Firstly, we replace partitions $\dotsc,(\gamma_{i-1},\gamma_{i}),\{\gamma_{i}\},(\gamma_{i},\gamma_{i+1}), \dotsc$  used in~\cite{baeBoundedModelChecking2019} (see also \cref{def:stablePartitioning}) with new ``partitions''  $\dotsc,[\gamma_{i-1},\gamma_{i}],[\gamma_{i},\gamma_{i+1}], \dotsc$. The latter can be expressed only using $\le$; but they have overlaps (at $\gamma_{i}$).
The original stability notion (see~\cref{sec:finiteVariability}) does not fit the new partition notion---it requires ``constantly true'' or ``never true,'' and prohibits overlaps. Therefore we introduce  \emph{$\delta$-stability}; it requires either ``constantly true'' or ``never \emph{robustly} true.''

\begin{figure}[tbp]
 \begin{minipage}[t]{.48\textwidth}
        \centering

        \scalebox{.7}{
            \begin{tikzpicture}[baseline=(current bounding box.north)]
 \coordinate[label=below left:O] (O) at (0,0);
 \draw[thin,-latex](O) --(7.3,0) node[right] {$t$};
 \draw[thin,-latex](O) --(0,2) node[left] {$x$};
 \draw[thick] (0,1.4)--(1,1)--(1.6,0.2)--(3,1)--(4.2,1.6)--(4.5,1.3)--(5,1.6)--(6,1)--(7,0.4);
 \coordinate (T1) at (0, -1);
 \coordinate (T2) at (1, -1);
 \coordinate (T3) at (3, -1);
 \coordinate (T4) at (6, -1);
 \coordinate (T5) at (7, -1);
 \draw [dashed,thin] (0,1) -- (7,1) node[anchor=west]{$x \geq 1$};
 %
 \draw[thick,-latex](-0.3,-1) --(7.3,-1);
 %
 \draw [dotted,thin] ([shift={(0,3)}]T1) -- (T1) node[below=0.1]{$J_1$};
                \draw[(-),thick] (T1) -- (T2) node[midway,below=0.1]{$J_2$};
 \draw [dotted,thin] ([shift={(0,3)}]T2) -- (T2) node[below=0.1]{$J_3$};
 \draw[(-),thick] (T2) -- (T3) node[midway,below=0.1]{$J_4$};
 \draw [dotted,thin] ([shift={(0,3)}]T3) -- (T3) node[below=0.1]{$J_5$};
 \draw[(-),thick] (T3) -- (T4) node[midway,below=0.1]{$J_6$};
 \draw [dotted,thin] ([shift={(0,3)}]T4) -- (T4) node[below=0.1]{$J_7$};
 \draw[(-),thick] (T4) -- (T5) node[midway,below=0.1]{$J_8$};
 \draw [dotted,thin] ([shift={(0,3)}]T5) -- (T5) node[below=0.1]{$J_9$};
 \node[below=0.7] () at (-0.5,-1) {$\varphi$};
 \node[below=0.7] (J1) at (T1) {$\top$};
 \node[below=0.7] (J2) at ($(T1)!0.5!(T2)$) {$\top$};
 \node[below=0.7] (J3) at (T2) {$\top$};
 \node[below=0.7] (J4) at ($(T2)!0.5!(T3)$) {$\bot$};
 \node[below=0.7] (J5) at (T3) {$\top$};
 \node[below=0.7] (J6) at ($(T3)!0.5!(T4)$) {$\top$};
 \node[below=0.7] (J7) at (T4) {$\top$};
 \node[below=0.7] (J8) at ($(T4)!0.5!(T5)$) {$\bot$};
 \node[below=0.7] (J9) at (T5) {$\bot$};
            \end{tikzpicture}
        }

        \caption{A stable partition (cf.~\cite{baeBoundedModelChecking2019}) for $\sigma$ and $\varphi :\equiv x \geq 1$. The symbols $\top$ and $\bot$ denote the (constant) truth value of $\varphi$ each interval $J_i$.}
        \label{fig:stablePartitionSeq}
 \end{minipage}
 \hfill
 \begin{minipage}[t]{.48\textwidth}
        \centering

        \scalebox{.7}{
            \begin{tikzpicture}[baseline=(current bounding box.north)]
 \coordinate[label=below left:O] (O) at (0,0);

 \coordinate (T1) at (0, -1);
 \coordinate (T2) at (1, -1);
 \coordinate (T2') at (0.5, -1);
 \coordinate (T2'') at (0.8, -1);
 \coordinate (T3) at (3, -1);
 \coordinate (T3') at (3.3, -1);
 \coordinate (T3'') at (3.1, -1);
 \coordinate (T4) at (6, -1);
 \coordinate (T4') at (5.7, -1);
 \coordinate (T4'') at (5.8, -1);
 \coordinate (T5) at (7, -1);

 \fill[blue!20] ([shift={(0,3)}]T2) rectangle (T2');
 \fill[blue!20] ([shift={(0,3)}]T3) rectangle (T3');
 \fill[blue!20] ([shift={(0,3)}]T4) rectangle (T4');

\draw[thin,-latex](O) --(7.3,0) node[right] {$t$};
\draw[thin,-latex](O) --(0,2) node[left] {$x$};
\draw[thick] (0,1.4)--(1,1)--(1.6,0.2)--(3,1)--(4.2,1.6)--(4.5,1.3)--(5,1.6)--(6,1)--(7,0.4);

 \draw [dashed,thin] (0,1) -- (7,1) node[anchor=west]{$x \geq 1$};
 \draw [dashed,thin] (0,1.2) -- (7,1.2) node[anchor=south west]{$x \geq 1 + \delta$};
 %
 \draw[thick,-latex](-0.3,-1) --(7.3,-1);
 %
 \draw [dotted,thin] ([shift={(0,3)}]T1) -- (T1) ;
                \draw[{[-]},thick] (T1) -- (T2'') node[midway,below=0.1]{$\Gamma_1$};
 \draw [dotted,thin] (T2) -- (T2) ;
 \draw [dotted,thin] ([shift={(0,3)}]T2') -- (T2') ;
 \draw[{[-]},thick] (T2'') -- (T3'') node[midway,below=0.1]{$\Gamma_2$};
 \draw [dotted,thin] ([shift={(0,3)}]T3) -- (T3) ;
 \draw [dotted,thin] ([shift={(0,3)}]T3') -- (T3') ;
 \draw[{[-]},thick] (T3'') -- (T4'') node[midway,below=0.1]{$\Gamma_3$};
 \draw [dotted,thin] ([shift={(0,3)}]T4) -- (T4) ;
 \draw [dotted,thin] ([shift={(0,3)}]T4') -- (T4') ;
 \draw[{[-]},thick] (T4'') -- (T5) node[midway,below=0.1]{$\Gamma_4$};
 \draw [dotted,thin] ([shift={(0,3)}]T5) -- (T5) ;
 \node[below=0.7] () at (-0.5,-1) {$\varphi\phantom{\delta}$};
 \node[below=0.7] (J2) at ($(T1)!0.5!(T2'')$) {$\top$};
 \node[below=0.7] (J4) at ($(T2'')!0.5!(T3'')$) {$?$};
 \node[below=0.7] (J6) at ($(T3'')!0.5!(T4'')$) {$\top$};
 \node[below=0.7] (J8) at ($(T4'')!0.5!(T5)$) {$?$};
 \node[below=1.2] () at (-0.5,-1) {$\varphi^\delta$};
 \node[below=1.2] (J2) at ($(T1)!0.5!(T2'')$) {$?$};
 \node[below=1.2] (J4) at ($(T2'')!0.5!(T3'')$) {$\bot$};
 \node[below=1.2] (J6) at ($(T3'')!0.5!(T4'')$) {$?$};
 \node[below=1.2] (J8) at ($(T4'')!0.5!(T5)$) {$\bot$};
            \end{tikzpicture}
        }

 \caption{A $\delta$-stable partition (\cref{def:deltaStable}) for $\sigma$ and $\varphi
$.
Here $\varphi^{\delta}\equiv (x\ge 1+\delta)$.
$\top$ and $\bot$ are much like in \cref{fig:stablePartitionSeq};
the symbol $?$ indicates that the truth value is not constant in that interval.
 In some regions (shaded), $\sigma^t \models \varphi$ is true but     $\sigma^t \models \varphi^\delta$ is not.
 }
        \label{fig:deltaStablePartitionSeq}
 \end{minipage}
\end{figure}

\begin{example}
    Let $\sigma$ be a continuous signal.
    Suppose that a sequence $\mathcal{P} = {(J_i)}_{i=1}^M$ is a stable partition for $\sigma$ and an STL formula $\varphi$, as illustrated in \cref{fig:stablePartitionSeq}.
By definition, intervals $J_i$ are mutually disjoint and constitute an alternating sequence of open intervals and singular intervals. In \cref{fig:stablePartitionSeq}, the time domain $[0,T]$ is split into nine intervals.


    In this paper, since MILP solvers do not accommodate strict inequalities, we are forced to use closed intervals; see $\Gamma_{1},\dotsc, \Gamma_{4}$ in \cref{fig:deltaStablePartitionSeq}. Notice that the truth value of the formula $\varphi$ is not constant in $\Gamma_{2}$ or $\Gamma_{4}$. To regain stability, we introduce the \emph{$\delta$-tightening} $\varphi^{\delta}$ of the formula $\varphi$ with some $\delta > 0$ (\cref{def:deltaTightening}); here $\varphi^{\delta}\equiv (x\ge 1+\delta)$. Then the truth value of $\varphi^{\delta}$ (instead of $\varphi$) is constantly false in $\Gamma_{2}$ and $\Gamma_{4}$, that is, $\varphi$ is ``never $\delta$-robustly true'' in $\Gamma_{2}$ and $\Gamma_{4}$.

\end{example}

\begin{definition}[time sequence, timed state sequence]\label{def:timedStateSequence}
    A \emph{time sequence}
    of $[0, T]$ is a sequence $\Gamma=(0 = \gamma_0 <
\dots < \gamma_N = T)$. Such a  time sequence induces a ``partition of $[0,T]$ with singular overlaps,'' namely $\Gamma=\bigl([\gamma_{i-1}, \gamma_{i}]\bigr)_{i=1}^N$. We identify it with the original time sequence,
     writing $\Gamma_i$ for the interval $[\gamma_{i-1}, \gamma_{i}]$.

    Given a  time sequence,
    a \emph{timed state sequence} over $V$ is a sequence $\varsigma = \bigl((x_0, \gamma_0), \dots, (x_N, \gamma_N)\bigr)$,
    where $x_0, \dots, x_N$ in $\mathbb{R}^V$.
\end{definition}
In MILP, it is efficient to represent signals as (continuous) \emph{piecewise-linear signals}, so that values within an interval can be deduced by linear interpolation.

\begin{definition}[piecewise-linear signal]\label{def:piecewiseLinearSignal}
    Given a timed state sequence $\varsigma = ((x_0, \gamma_0), \dots, (x_N, \gamma_N))$,
    the signal $\varsigma^\mathrm{pwl}\colon [0, \gamma_N] \to \mathbb{R}^V$ is defined by the following linear interpolation: $\varsigma^\mathrm{pwl}(t) \Defeq (1- \lambda) x_{i-1} + \lambda x_{i}$ if $\gamma_{i-1} \leq t \leq \gamma_i$ (where  $\lambda = \frac{1}{\gamma_{i} - \gamma_{i-1}} (t - \gamma_{i-1})$).

    In this paper, a \emph{piecewise-linear signal} is a signal of the form $\varsigma^\mathrm{pwl}$ for some timed state sequence $\varsigma$. Note that it is continuous everywhere, and is linear everywhere  except for only finitely many points.
\end{definition}
Obviously, $\varsigma^\mathrm{pwl}$ is finitely variable with respect to any linear predicate $p$ (\cref{def:linPred}).

\begin{definition}[$\delta$-tightening of linear predicates]\label{def:deltaTightening}
    Let $\delta > 0$ be a positive real and $p$ be a linear predicate defined by $p(x) = \top \iff c^{\top}x + b \geq 0$.
    The \emph{$\delta$-tightening} of $p$ is a linear predicate defined by $p^\delta(x) = \top \iff c^{\top}x + b \geq \delta$.
\end{definition}

\noindent
Note that $p^\delta$ is stronger than $p$, i.e., $[\![p^\delta]\!] \subsetneq [\![ p ]\!]$.
We further extend the concept of $\delta$-tightening for general STL formulas in NNF (cf.\ \cref{prop:NNF}).
Let $p^-$ be the linear predicate defined by  $p^-(x) = \top \iff -c^{\top}x - b \geq 0$.
\begin{definition}[$\delta$-tightening of STL formulas in NNF]
    Let $\varphi$ be an STL formula in NNF.
    The \emph{$\delta$-tightening}  $\varphi^\delta$ of $\varphi$
    is the STL formula obtained from $\varphi$ by replacing all occurrences of atomic predicates $p$ (resp.\ $\lnot p$)  by $p^\delta$ (resp. $(p^-)^\delta$).
\end{definition}

\noindent
The $\delta$-tightening construction is related to robust semantics (\cref{def:STLRobustSem}).
\begin{proposition}\label{prop:nnfSemantics}
    Let $\sigma$ be a signal, $\varphi$ be an STL formula in NNF, and  $\delta >0$.
    Then $\sigma \models \varphi^\delta$ implies $[\![\sigma, \varphi]\!] \geq \delta$.
\qed
\end{proposition}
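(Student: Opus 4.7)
The plan is to proceed by structural induction on $\varphi$ in negation normal form, exploiting the fact that $\delta$-tightening commutes with every connective in NNF: that is, $(\varphi_1 \land \varphi_2)^\delta \equiv \varphi_1^\delta \land \varphi_2^\delta$, $(\Box_I \psi)^\delta \equiv \Box_I \psi^\delta$, and similarly for $\lor, \Diamond_I, \UntilOp{I}, \ReleaseOp{I}$. This is immediate from the definition, since tightening is applied only at the leaves (atomic predicates and their negations), and leaving the Boolean and temporal connectives untouched.

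For the base cases, suppose $\varphi \equiv p$ with $p(x) = \top \iff c^\top x + b \geq 0$, so that $\pi_p(x) = c^\top x + b$. If $\sigma \models p^\delta$, then $c^\top \sigma(0) + b \geq \delta$, which is exactly $\sem{\sigma, p} \geq \delta$ by the definition of robust semantics for atomic propositions. For the other base case $\varphi \equiv \lnot p$, the NNF tightening replaces it with $(p^-)^\delta$; if $\sigma$ satisfies this, then $-c^\top \sigma(0) - b \geq \delta$, i.e., $\sem{\sigma, \lnot p} = -\pi_p(\sigma(0)) \geq \delta$.

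The inductive cases follow uniformly from the $\min/\sup/\inf$-based clauses of robust semantics. For $\varphi \equiv \varphi_1 \land \varphi_2$, $\sigma \models \varphi^\delta$ gives $\sigma \models \varphi_1^\delta$ and $\sigma \models \varphi_2^\delta$, so by the induction hypothesis $\sem{\sigma, \varphi_i} \geq \delta$ for $i=1,2$, hence $\sem{\sigma, \varphi} = \min(\sem{\sigma, \varphi_1}, \sem{\sigma, \varphi_2}) \geq \delta$. The disjunction case is dual. For $\varphi \equiv \Box_I \psi$, $\sigma \models \Box_I \psi^\delta$ means $\sigma^t \models \psi^\delta$ for every $t \in I$, so by IH $\sem{\sigma^t, \psi} \geq \delta$ for all such $t$, and taking the infimum preserves the lower bound $\delta$. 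The cases of $\Diamond_I$, $\UntilOp{I}$, and $\ReleaseOp{I}$ are handled identically, using $\sup$, $\sup\min$, and $\inf\max$ respectively; each of these aggregation operators preserves uniform lower bounds.

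I do not expect any real obstacle here: the statement is essentially the observation that $\delta$-tightening at the atomic level is ``semantically faithful'' under the $\min/\max/\sup/\inf$-structure of robust semantics, and the NNF hypothesis is precisely what makes the induction clean by avoiding the need to push $\delta$ through negations of compound formulas. The only point requiring some care is verifying that the tightening of $\lnot p$ (via $(p^-)^\delta$) matches the robust semantics of $\lnot p$, which amounts to a sign calculation on $c^\top x + b$.
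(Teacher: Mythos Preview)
Your proof is correct; the paper omits the argument entirely (the proposition is stated with a bare \qed), so there is no explicit approach to compare against. The structural induction you give is the natural and presumably intended proof, and your handling of both base cases and the temporal connectives is sound.
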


\noindent
Since the closed halfspace $[\![p^-]\!]$ coincides with the closure of the open halfspace $\mathbb{R}^V \setminus [\![p]\!]$, the robust semantics is not affected by the difference between $p^-$ and $\lnot p$.
For simplicity, in the following, we assume that any STL formula in NNF does not contain negation, i.e., $\lnot p$ is replaced by a new atomic proposition $p^-$.

We are ready to define \emph{$\delta$-stability}.
\begin{definition}[$\delta$-stability]\label{def:deltaStable}
Let $\varphi$  be an STL formula over $V$ in NNF, $\sigma\in\Signal_{V}^T$ be a signal,
and  $\Gamma=(\gamma_0, \dots, \gamma_N)$ be a  time sequence (\cref{def:timedStateSequence}) with $\gamma_{N}=T$.
We say $\Gamma$ is \emph{$\delta$-stable} for $\sigma$ and $\varphi$ if, for each $i \in [1,N]$ and each subformula $\psi\in\Sub(\varphi)$, either of the following holds: 1) $\sigma^t \models \psi$  for each $t\in \Gamma_i$, or 2) $\sigma^t \not\models \psi^\delta$   for each $t\in\Gamma_i$.
\end{definition}
In the above definition, in each interval $\Gamma_{i}$, a subformula $\psi$ is either 1) always true or 2) never robustly true.  The two conditions are not mutually exclusive---both hold if $\sigma^{t}\models \psi\land\lnot \psi^{\delta}$ for all $t\in\Gamma_{i}$. 

The next notion of conservative valuation records which of 1) and 2) is true in each interval.  It conservatively approximates the actual truth of $\varphi$ (\cref{fig:deltaStablePartitionSeq}).
\begin{definition}[conservative valuation]\label{def:conservValuation}
Let $\varphi$ be an STL formula in NNF, and $\Gamma=(\gamma_0, \dots, \gamma_N)$  be a  time sequence . A \emph{valuation} of $\varphi$ in $\Gamma$ is a function  $\Theta: \mathrm{Sub}(\varphi) \times [1,N] \to \mathbb{B}$ that assigns, to each subformula and  index of the intervals of $\Gamma$, a Boolean truth value.
Let $\sigma$ be a signal with a time horizon $T = \gamma_N$. We say that $\Theta$ is a \emph{conservative valuation} of $\varphi$ in $\Gamma$ on $\sigma$  (up to $\delta$) if
    1) $\Theta(\psi, i) = \top$ implies that, for each $t\in\Gamma_{i}$, $\sigma^t \models \psi$ holds; and 2) $\Theta(\psi, \Gamma_{i}) = \bot$ implies, for each $t\in\Gamma_i$, $\sigma^t \not\models \psi^\delta$.
\end{definition}
We simply write $\langle \psi \rangle_i$ for $\Theta(\psi, i)$ when $\Theta$ is clear from context.

\begin{lemma}
Suppose there exists a conservative valuation $\Theta$ of an STL formula $\varphi$ in a time sequence $\Gamma$ on a signal $\sigma$ up to $\delta$.
Then $\Gamma$ is $\delta$-stable for $\sigma$ and $\varphi$. \qed
\end{lemma}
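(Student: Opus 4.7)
The plan is to unfold the definitions and observe that the statement is essentially immediate: a conservative valuation assigns each pair $(\psi, i)$ a Boolean value in $\{\top, \bot\}$, and whichever value is assigned, the corresponding clause of the $\delta$-stability disjunction is forced.

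Concretely, I would fix an arbitrary subformula $\psi \in \Sub(\varphi)$ and an arbitrary index $i \in [1, N]$, and consider the value $\Theta(\psi, i)$. Since $\Theta$ is a function into $\mathbb{B} = \{\top, \bot\}$, there are exactly two cases. If $\Theta(\psi, i) = \top$, then by clause (1) of \cref{def:conservValuation} we get $\sigma^t \models \psi$ for every $t \in \Gamma_i$, which is precisely clause (1) of \cref{def:deltaStable}. If instead $\Theta(\psi, i) = \bot$, then by clause (2) of \cref{def:conservValuation} we get $\sigma^t \not\models \psi^\delta$ for every $t \in \Gamma_i$, which is precisely clause (2) of \cref{def:deltaStable}. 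Since $\psi$ and $i$ were arbitrary, $\Gamma$ is $\delta$-stable for $\sigma$ and $\varphi$.

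There is no real obstacle here: the lemma is a sanity check that the numerical notion (conservative valuation, which is what will actually be encoded in MILP as Boolean indicator variables) soundly witnesses the semantic notion ($\delta$-stability). The only thing to be mildly careful about is that the two clauses of $\delta$-stability need not be mutually exclusive (as the paragraph after \cref{def:deltaStable} notes), so the case split on $\Theta(\psi, i)$ is a genuine choice rather than a reflection of a semantic dichotomy; but this does not affect the argument, since we only need \emph{one} of the clauses to hold in each interval.
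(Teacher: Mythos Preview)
Your proof is correct and is precisely the unfolding that the paper's bare \qed\ is gesturing at: the paper regards the lemma as immediate from \cref{def:deltaStable,def:conservValuation}, and your case split on $\Theta(\psi,i)\in\{\top,\bot\}$ is exactly the intended argument.
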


\begin{example}
In \cref{fig:deltaStablePartitionSeq}, we have a conservative valuation $\Theta$ for the formula $\varphi \equiv x \geq 1$ such that $\Theta(1, \varphi) = \top$, $\Theta(2, \varphi) = \bot$, $\Theta(3, \varphi) = \top$, $\Theta(1, \varphi) = \bot$.
\end{example}

We shall argue in  \cref{sec:encodeFullSTL} that, for each piecewise-linear signal $\sigma$ (\cref{def:piecewiseLinearSignal}), an STL formula $\varphi$,
there is a time sequence $\Gamma$ in which $\varphi$ is $\delta$-stable on $\sigma$. We start with a special case where $\varphi$ is an atomic proposition $p$.

\begin{wrapfigure}[12]{r}[0pt]{4.3cm}
    \vspace{-1.5em}
    \centering
    \scalebox{.6}{    \begin{tikzpicture}[scale=1]

            \tikzset{
                pwl/.style={circle,fill, scale=0.3},
            }

            \fill[gray!30](0,0)--(6,0)--(6,6);
            \draw[thick,dashed](1,0)--(6,5);
            \draw[thick](0,0)--(6,6) node[above,rotate=45, xshift=-0.6cm]{$\pi_p(x) = 0$};
            \draw[<->,dashed](5,5)--(5.5,4.5) node[above, xshift=-0.1cm, yshift=0.2cm] {$\delta$};

            \node [pwl,label=$\sigma(\gamma_0)$] (G0) at (3,5.5) {};
            \node [pwl,label={[xshift=0.3cm]$\sigma(\gamma_1)$}] (G1) at (4.5,4) {};
            \node [pwl,label={[xshift=0.3cm]$\sigma(\gamma_2)$}] (G2) at (5,3) {};
            \node [pwl,label={[yshift=-0.6cm]$\sigma(\gamma_3)$}] (G3) at (4,2) {};
            \node [pwl,label={[xshift=0.25cm]$\sigma(\gamma_4)$}] (G4) at (2,1.5) {};
            \node [pwl,label=$\sigma(\gamma_5)$] (G5) at (0,1) {};

            \draw[->](G0)--(G1) {};
            \draw[->,red,thick](G1)--(G2) {};
            \draw[->,red,thick](G2)--(G3) {};
            \draw[->,red,thick](G3)--(G4) {};
            \draw[->](G4)--(G5) {};

            \node at (5,1) {$\sem{p}$};
        \end{tikzpicture}
    }    \caption{The $\delta$-crossing pairs $(\sigma(\gamma_1), \sigma(\gamma_2))$, $(\sigma(\gamma_3), \sigma(\gamma_4))$ are stationary. The red segments are assigned $\top$ by a conservative valuation.
    }
    \label{fig:deltaCrossingPair}
\end{wrapfigure}
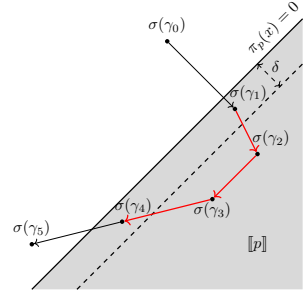

%
\begin{definition}\label{def:crossingPair}
    Let $x, x' \in \mathbb{R}^V$, and $p$ be a linear predicate on $V$. We say $(x, x')$ is a \emph{$\delta$-crossing pair} with respect to $p$ if $x\in \sem{p^\delta}$ and $x'\not\in\sem{p^\delta}$ (cf.\ \cref{def:linPred}), or vice versa.
    A $\delta$-crossing pair is \emph{stationary} if $x\in\sem{p}$ and $x'\in\sem{p}$.
\end{definition}

\begin{lemma}\label{lem:existenceOfDeltaStablePartitioning}
    Let $p$ be a  linear predicate
    and $\sigma$ be a  piecewise-linear signal.
    There is a  time sequence $\Gamma=(\gamma_{0}, \dots,\gamma_{N})$ such that, for any $i \in [1,N]$,
    1) $\sigma$ is a linear function on the interval $[\gamma_{i-1}, \gamma_{i}]$, and
2) if $(\sigma(\gamma_{i-1}), \sigma(\gamma_{{i}}))$ is a $\delta$-crossing pair, it is stationary. It follows that there is a conservative valuation $\Theta$ of $p$ in $\Gamma$ on $\sigma$.
\end{lemma}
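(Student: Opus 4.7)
The plan is to begin with the natural time sequence $\Gamma^{0}=(\gamma^{0}_{0},\dots,\gamma^{0}_{M})$ given by the piecewise-linearity of $\sigma$---on each subinterval $\sigma$ is affine, so condition 1) is automatic---and then to insert at most one additional break point per segment in order to eliminate non-stationary $\delta$-crossings. The resulting time sequence $\Gamma$ will satisfy both clauses of the lemma, and a conservative valuation $\Theta$ will be read off from the sign of $\pi_{p}\circ\sigma$ on each piece.

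For the refinement, I would fix a segment $[\gamma^{0}_{j-1},\gamma^{0}_{j}]$ and let $a=\pi_{p}(\sigma(\gamma^{0}_{j-1}))$ and $b=\pi_{p}(\sigma(\gamma^{0}_{j}))$. Since $\sigma$ is affine on this segment, the map $t\mapsto\pi_{p}(\sigma(t))$ is linear, so its image on the segment is the closed interval between $a$ and $b$. If $(a,b)$ is not a $\delta$-crossing pair, or is a $\delta$-crossing pair that is already stationary, I would keep the segment unchanged. Otherwise, WLOG $a\ge\delta$ and $b<0$, and I would split at the unique $t^{\ast}$ where $\pi_{p}(\sigma(t^{\ast}))=0$ (which exists by linearity and the intermediate-value theorem). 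The two resulting sub-segments have boundary values $(a,0)$ and $(0,b)$: the first is $\delta$-crossing but stationary, since both endpoint values are $\ge 0$; the second has both endpoint values strictly below $\delta$ and is therefore not $\delta$-crossing at all. Hence a single split per original segment suffices, and one pass through $\Gamma^{0}$ produces the required $\Gamma$.

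For the valuation, for each refined interval $\Gamma_{i}$ with endpoint $\pi_{p}$-values $a_{i},b_{i}$ I would set $\Theta(p,i)=\top$ whenever $\min(a_{i},b_{i})\ge 0$ and $\Theta(p,i)=\bot$ otherwise. The $\top$-case is immediate: linearity makes the minimum of $\pi_{p}\circ\sigma$ on $\Gamma_{i}$ equal to $\min(a_{i},b_{i})\ge 0$, so $\sigma(t)\in\sem{p}$ for all $t\in\Gamma_{i}$. For the $\bot$-case, if $\max(a_{i},b_{i})\ge\delta$ then $(a_{i},b_{i})$ would be a $\delta$-crossing pair with $\min(a_{i},b_{i})<0$, hence non-stationary, contradicting condition 2) already established; so $\max(a_{i},b_{i})<\delta$, and by linearity $\pi_{p}(\sigma(t))<\delta$ on all of $\Gamma_{i}$, i.e., $\sigma(t)\notin\sem{p^{\delta}}$ everywhere there.

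The only delicate point---and the one to verify carefully---is that a single split at $\pi_{p}=0$ really eliminates non-stationary crossings without introducing new ones in either sub-segment; this is exactly the sub-case check carried out in the second paragraph. Everything else reduces to the elementary observation that a linear function on a closed interval attains its extrema at the endpoints.
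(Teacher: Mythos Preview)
Your proposal is correct and follows essentially the same approach as the paper: start from the piecewise-linear partition and insert extra break points, via continuity/the intermediate value theorem, so that every $\delta$-crossing becomes stationary. Your write-up is considerably more detailed than the paper's three-sentence sketch; in particular you make explicit that a single split at $\pi_{p}=0$ per offending segment suffices, and you spell out the construction and verification of the conservative valuation $\Theta$, which the paper leaves implicit.
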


\begin{proof}
    The lemma argues that, whenever $\sigma$ enters or leaves $\sem{p^{\delta}}$, it has to do so via $\sem{p}\setminus \sem{p^{\delta}}$. See \cref{fig:deltaCrossingPair}. This can be enforced by adding suitable  points to $\Gamma$, exploiting  continuity of $\sigma$  (\cref{def:piecewiseLinearSignal}) and the intermediate value theorem.   \qed
\end{proof}
%

\begin{auxproof}
    \begin{lemma}\label{prop:deltaStableConditionAtomic}
        Let $p, \sigma, \Gamma$ be as in~\cref{lem:existenceOfDeltaStablePartitioning}. Then
        $\Gamma$ is $\delta$-stable for $\sigma$ and $p$. \qed
    \end{lemma}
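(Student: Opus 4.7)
The plan is to deduce the lemma directly from \cref{lem:existenceOfDeltaStablePartitioning} by exhibiting, for each interval $\Gamma_i = [\gamma_{i-1}, \gamma_i]$, one of the two alternatives required by \cref{def:deltaStable}. Because $p$ is a single atomic proposition, the set $\Sub(p)$ consists of $p$ itself, so only this one subformula needs to be examined on each interval.

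First I would case-split on where the endpoints $\sigma(\gamma_{i-1})$ and $\sigma(\gamma_i)$ sit relative to the tightened halfspace $\sem{p^\delta}$. Condition~2 of \cref{lem:existenceOfDeltaStablePartitioning} forbids any non-stationary $\delta$-crossing pair, so only three configurations are possible: (a) both endpoints lie in $\sem{p^\delta}$; (b) both endpoints lie outside $\sem{p^\delta}$; or (c) exactly one lies in $\sem{p^\delta}$ and, by stationarity, both still lie in $\sem{p}$. The forbidden fourth configuration, where one endpoint is in $\sem{p^\delta}$ and the other is outside $\sem{p}$ altogether, is precisely the non-stationary $\delta$-crossing excluded by the hypothesis.

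The key geometric fact I will invoke is convexity: $\sem{p}$ and $\sem{p^\delta}$ are closed halfspaces, and the complement $\mathbb{R}^V \setminus \sem{p^\delta}$ is an open halfspace, so all three sets are convex. Combined with condition~1 of \cref{lem:existenceOfDeltaStablePartitioning}, which makes $\sigma$ affine on $\Gamma_i$, this lets me trace $\sigma(\Gamma_i)$ inside a convex set in each case: in (a) and (c) the image lies entirely in $\sem{p}$, so $\sigma^t \models p$ throughout $\Gamma_i$; in (b) the image avoids $\sem{p^\delta}$ altogether, so $\sigma^t \not\models p^\delta$ throughout $\Gamma_i$. Either alternative of \cref{def:deltaStable} is then satisfied for $\Gamma_i$ and the subformula $p$, and since $i$ was arbitrary, $\Gamma$ is $\delta$-stable for $\sigma$ and $p$.

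The main obstacle I expect is being scrupulous about case~(c): one must check that stationarity genuinely prevents the affine segment $\sigma|_{\Gamma_i}$ from slipping outside $\sem{p}$ at some interior point, which is where the convexity of $\sem{p}$ (as opposed to that of $\sem{p^\delta}$) does the real work. Alternatively, one may simply bundle the argument by invoking the conservative valuation $\Theta$ already produced in the final clause of \cref{lem:existenceOfDeltaStablePartitioning} together with the preceding (unnamed) lemma stating that any conservative valuation yields $\delta$-stability; but I would prefer the direct geometric route, because it makes the distinct roles of conditions~1 and~2 of \cref{lem:existenceOfDeltaStablePartitioning} fully transparent.
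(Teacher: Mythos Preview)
Your proposal is correct and essentially the same as the paper's proof: both fix an interval $\Gamma_i$, use linearity of $\sigma$ there together with convexity of the halfspaces $\sem{p}$ and $\mathbb{R}^V\setminus\sem{p^\delta}$, and appeal to condition~2 of \cref{lem:existenceOfDeltaStablePartitioning} to rule out the non-stationary crossing. The only cosmetic difference is that the paper case-splits on whether both endpoints lie in $\sem{p}$ (rather than $\sem{p^\delta}$), but the logical content is identical.
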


    \begin{proof}
        The goal is to show that, if both of $\sigma(\gamma_{i-1}) \in [\![p]\!]$ and $\sigma(\gamma_{i}) \in [\![p]\!]$ holds, then $\sigma(t) \in [\![p]\!]$;
        otherwise, $\sigma(t) \not\in [\![p^\delta]\!]$.
        It is clear that, we only need to check the case either $\sigma(\gamma_{i-1}) \not\in [\![p]\!]$ or $\sigma(\gamma_{i}) \not\in [\![p]\!]$ holds, but not both.
        Suppose $\sigma(\gamma_{i-1}) \not\in [\![p]\!]$ and $\sigma(\gamma_{i}) \in [\![p]\!]$.
        Since it results in a $\delta$-crossing pair that is not stationary, $\sigma(\gamma_{i-1})$ must not be in $[\![p^\delta]\!]$.
        We conclude $\sigma(\gamma_{i-1}), \sigma(\gamma_{i}) \not \in [\![p^\delta]\!]$, and consequently $\sigma(t) \not \in [\![p^\delta]\!]$.
        The opposite case can be checked in the same way. \qed
    \end{proof}
\end{auxproof}


We note another advantage of $\delta$-stable partitions: the number of invervals is roughly halved compared to (original) stable partitions (see \cref{fig:stablePartitionSeq,fig:deltaStablePartitionSeq}). This advantage may be exploited also in SMT-based approaches~\cite{baeBoundedModelChecking2019} for scalability.





\subsection{Variable-Interval MILP Encoding}\label{sec:encodeFullSTL}
Our  MILP encoding of STL relies on the constructs in \cref{sec:deltaStablePartitioning}. For the purpose of trace synthesis for an STL formula $\varphi$, our basic strategy is
to search for
1)  a time sequence  $\Gamma=(\gamma_0, \dots, \gamma_N)$ (i.e.\ a ``partition,'' see \cref{def:timedStateSequence}) and
2) a valuation $\Theta: \mathrm{Sub}(\varphi) \times [1,N] \to \mathbb{B}$,
such that
\begin{itemize}
\item $\Theta$ is \emph{consistent} in the sense that the truth values assigned to subformulas comply with the STL semantics (\cref{sec:signalTemporalLogic});
    \item $\Theta$ is \emph{fulfilling} in the sense that it assigns $\top$ to the top-level formula $\varphi$ in $\Gamma_{1}$ (the first interval); and
    \item $\Theta$ is \emph{realizable} in the sense that there is a piecewise-linear trace  $\sigma\in\mathcal{L}(\mathcal{M})$ of $\mathcal{M}$ that \emph{yields} $\Theta$. That is, precisely,   $\Theta$ must be a conservative valuation of $\varphi$ in $\Gamma$ on $\sigma$ (\cref{def:conservValuation}).
\end{itemize}
The entities $\Gamma,\Theta$ we search for are expressed as MILP variables, and the above three conditions  are expressed as MILP constraints. We describe these MILP variables and constraints in the rest of the section. The constraints expressing $\sigma\in\mathcal{L}(\mathcal{M})$ require system model encoding and are thus deferred to later sections.

\myparagraph{Variables} We use the following MILP variables. Their collection is denoted by $\mathbf{Var}(\varphi, N)$. Here $N\in \mathbb{N}$ is a constant for variability bound (\cref{prob:traceSynthesis}).
\begin{itemize}
    \item Real-valued variables $\{ \gamma_{0}, \dots ,\gamma_{N} \}$ for a  time sequence $\Gamma$.
    \item  Boolean variables $\{ \langle \psi \rangle_{i} \mid 1 \leq i \leq N, \psi \in \Sub(\varphi ) \}$ for the value $\Theta(\psi,i)$ of a valuation $\Theta$ that we search for.
    \item  Real-valued variables
          $\{ x_{i,v}\mid 0\leq i\leq N , v\in V\}$ for the values of a piecewise-linear trace $\sigma\in\mathcal{L}(\mathcal{M})$.
    \item  Boolean variables $\{ \zeta^p_{i}, \zeta^{\delta, p}_{i} \mid 0 \leq i \leq N, p \in \AP(\varphi ) \}$ for the truth values of $p$ and $p^\delta$ at time $\gamma_{i}$. These auxiliary variables are used to detect crossing pairs (\cref{def:crossingPair}).

    \item Real-valued variables $\{ S^\psi_{i} \mid 0 \leq i \leq N, \Box_{I}\psi \in \Sub(\varphi) \}$. This auxiliary variable records for how long $\psi$ has been true before $\gamma_{i}$.   

    \item Real-valued variables $\{ P^\psi_{i} \mid 0 \leq i \leq N, \Diamond_{I}\psi \in \Sub(\varphi) \}$. This auxiliary variable records for how long  $\psi$ has been false before $\gamma_{i}$. 

\end{itemize}

By an \emph{assignment} we refer to a function $\boldsymbol{v}\colon \mathbf{Var}(\varphi, N)\to \Real$ such that $\boldsymbol{v}(y)\in\{0,1\}$ for each Boolean variable $y$.  The MILP problem is to find an assignment $\boldsymbol{v}$ that optimizes an objective under given constraints.

%
%
%
%
%
%

\begin{notation}
    In what follows, as a notational convention, we simply write a variable $y$ for the value $\boldsymbol{v}(y)$ when the assignment $\boldsymbol{v}$ is clear from context.

    We write   $\varsigma$ for the timed state sequence composed of the time sequence $\{\gamma_{0}, \dots ,\gamma_{N}\}$ and the trace values  $\{ x_{j,v}\mid 0\leq j\leq N , v\in V\}$.
\end{notation}

Note that, in this paper, we encode the \emph{Boolean} semantics of STL~\cite[Def. A.1]{satoCAV24Extended}, unlike~\cite{ramanReactiveSynthesisSignal2015,RamanDMMSS14} where the \emph{robust} semantics~\cite[Def. A.2]{satoCAV24Extended} is encoded in a constant-interval manner. The combination of variable-interval encoding and quantitative robust semantics is future work; for example, a quantitative extension of $\delta$-stable partitions (\cref{sec:deltaStablePartitioning}) seems quite nontrivial.

\myparagraph{Shorthands for Propositional Connectives%
}
We use standard  shorthands for Boolean connectives in MILP constraints  (such as $\lnot A, A\land B$ where $A,B$ are Boolean variables). They are defined in \cref{sec:shorthand}.

\myparagraph{Realizability Constraints: Traces and Atomic Propositions}
We need to constrain $\gamma_0,\dots,\gamma_N$ to be a time sequence (\cref{def:timedStateSequence}), using some constant $\varepsilon>0$ and letting $\cdots\ge \varepsilon$ stand for  $\cdots>0$.
\begin{equation}\label{eq:timeSequenceConstraint}
    \begin{array}{ll}
        \gamma_0 = 0,\quad \gamma_N = T,\quad \gamma_{i} - \gamma_{i-1} \geq \varepsilon
         & \quad\text{for all $i \in [1,N]$}
    \end{array}
\end{equation}
For each $i$ and  $p\in \mathrm{AP}(\varphi)$ (say  $p$ is defined by $c^{\top}x + b \geq 0$), the variables $\zeta^p_i, \zeta^{\delta,p}_i$ are constrained as follows,

\begin{equation}\label{eq:AtomicLinearConstraint1}
    \begin{array}{ll}
        \zeta^p_i = 1 \;\Rightarrow\; c^{\top}x_i + b \geq 0\qquad\qquad\qquad
         & \zeta^p_i = 0 \;\Rightarrow\; c^{\top}x_i + b \leq -\varepsilon                 \\
        \zeta^{\delta, p}_i = 1 \;\Rightarrow\; c^{\top}x_i + b \geq \delta
         & \zeta^{\delta, p}_i = 0 \;\Rightarrow\; c^{\top}x_i + b \leq \delta-\varepsilon
    \end{array}
\end{equation}
Moreover, we impose the following to ensure that $\Gamma$ is the one in \cref{lem:existenceOfDeltaStablePartitioning}:
\begin{equation}\label{eq:AtomicLinearConstraint2}
    \begin{array}{ll}
        \zeta^{\delta, p}_i = 0 \land \zeta^{\delta, p}_{i+1} = 1 \;\Rightarrow\; \zeta^p_i = 1
        , \quad
        \zeta^{\delta, p}_i = 1 \land \zeta^{\delta, p}_{i+1} = 0 \;\Rightarrow\; \zeta^p_{i+1} = 1
    \end{array}
\end{equation}
Under constraints~\cref{eq:timeSequenceConstraint,eq:AtomicLinearConstraint1,eq:AtomicLinearConstraint2},
$\Gamma$ is $\delta$-stable for $\varsigma^\mathrm{pwl}$ (cf.\ \cref{def:piecewiseLinearSignal}) and $p$, by \cref{lem:existenceOfDeltaStablePartitioning}. By the definition of $\delta$-stability, we can now constrain the variable $\langle p \rangle_i$ by
$\langle p \rangle_i = \zeta^{\delta,p}_{i-1} \lor \zeta^{\delta,p}_{i}$
for each $i$ and $p\in \mathrm{AP}(\varphi)$.

\begin{remark}
    Note that $\varepsilon$ must be chosen to be small enough for the completeness of the encoding  (\cref{thm:completeness}).
    Thereafter we assume that, given a piecewise-linear signal $\sigma$ and an STL formula $\varphi$, $\varepsilon$ is small enough to find a $\delta$-stable partition for $\sigma$ and $\varphi$, and we omit $\varepsilon$ from the constraints for simplicity.
\end{remark}

\myparagraph{Consistency Constraints I: Boolean Connectives%
}
We can directly encode conjunction $\bigwedge_{j=1}^m \psi_j$ in STL by recursively applying the shorthand $\land$ in \cref{sec:shorthand}:
$\textstyle
    \langle \bigwedge_{j=1}^m \psi_j \rangle_i = \langle \psi_1 \rangle_i \land \langle \bigwedge_{j=2}^m \psi_j \rangle_i
$
for each $i\in[1,N]$.
It is known that the following alternative encoding avoids auxiliary variables $\langle \bigwedge_{j=k}^m \psi_j \rangle_i$  (where $k$ varies): for each $i\in[1,N]$,
$\langle \bigwedge_{j=1}^m \psi_j \rangle_i \ge 1 - m + \sum_{j=1}^m \langle \psi_j \rangle_i$ and
$\textstyle\langle \bigwedge_{j=1}^m \psi_j \rangle_i \le \langle \psi_j \rangle_i$.
An encoding for disjunction is given similarly:
$\textstyle      \langle \bigvee_{j=1}^m \psi_j \rangle_i \le \sum_{j=1}^m \langle \psi_j \rangle_i$,
$    \langle \bigvee_{j=1}^m \psi_j \rangle_i \ge \langle \psi_j \rangle_i$.

\myparagraph{Consistency Constraints II: Unbounded Temporal Modalities}
For temporal operators with $I=[0,\infty)$, the following encodings are straightforward.
\begin{align}
    \begin{array}{ll}
        \langle \psi_1 \UntilOp{} \psi_2 \rangle_i = \langle \psi_2 \rangle_{i} \lor (\langle \psi_1 \UntilOp{} \psi_2 \rangle_{i+1} \land \langle \psi_1 \rangle_{i}),\quad
        \\
        \langle \psi_1 \ReleaseOp{} \psi_2 \rangle_i = \langle \psi_2 \rangle_{i} \land (\langle \psi_1 \ReleaseOp{} \psi_2 \rangle_{i+1} \lor \langle \psi_1 \rangle_{i})
        \quad & \quad\text{for each $i\in[1,N-1]$,}
        \\
        \langle \psi_1 \UntilOp{} \psi_2 \rangle_N = \langle \psi_2 \rangle_{N}, \quad
        \langle \psi_1 \ReleaseOp{} \psi_2 \rangle_N = \langle \psi_2 \rangle_{N}
              & \quad\text{for  $i=N$.}
    \end{array}
\end{align}
The encodings for $\Diamond,\Box$ are special cases:
\begin{align*}
     & \langle \Diamond \psi \rangle_i = \langle \psi \rangle_{i} \lor \langle \Diamond \psi \rangle_{i+1},                                         \\
     & \langle \Box \psi \rangle_i = \langle \psi \rangle_{i} \land \langle \Box \psi \rangle_{i+1}
     &                                                                                                      & \quad \text{for each $i \in [1,N-1]$} \\
     & \langle \Box \psi \rangle_N = \langle \psi \rangle_N,
    \langle \Diamond \psi \rangle_N = \langle \psi \rangle_N
     &                                                                                                      & \quad \text{for $i = N$}.
\end{align*}

\myparagraph{Consistency Constraints III: Bounded Temporal Modalities}
This is the most technically involved part. The challenge here is that the stability for $\Box_{[a,b]} \psi$ is not guaranteed by the stability for $\psi$ (similarly for $\Diamond_{[a,b]}\psi$). Therefore we need additional MILP constraints for the stability for  $\Box_{[a,b]} \psi$.

Our encoding is inspired by the results from~\cite{prabhakarAutomaticTraceGeneration2018}; ours is  simpler thanks to our theory in \cref{sec:deltaStablePartitioning} where intervals are all closed.

Recall that we use the variables $S^\psi_{i},P^\psi_{i}$ for this purpose.  We focus on $\Box_{[a,b]} \psi$; the encoding of $\Diamond_{[a,b]} \psi$ is similar.  The  constraints on $S^\psi_{i}$ are as follows.
\begin{align*}\small
    \begin{array}{ll}
S^\psi_0  = 0, \qquad
        \langle \psi \rangle_{i} = 0 \;\Rightarrow\; S^\psi_i = 0, \\
        \langle \psi \rangle_{i} = 1 \;\Rightarrow\; S^\psi_i \geq S^\psi_{i-1} + (\gamma_i - \gamma_{i-1})
         & \quad\text{for each $i\in[1,N]$.}
    \end{array}
\end{align*}
It follows that, for any non-negative real number $L\in [0,\gamma_j)$,
we have $S^\psi_j \leq L$ if and only if there exists $k\in [1, j]$ such that $\langle \psi \rangle_k = 0$ and $\gamma_j - \gamma_k \leq L$.

We proceed to the constraints that describe the relationship between $S^\psi_i$ and the semantics of $\Box_{I} \psi$.
Suppose  $\Gamma = ( \gamma_0,\dots,\gamma_N )$ is  $\delta$-stable for a signal $\sigma$ and $\psi$.
Let us write
$\gamma_{N+1} = \infty$ and
$\langle \psi \rangle_{N+1} = \langle \psi \rangle_{N}$
for simplicity.

We consider consistency for the positive and negative cases separately. For the positive one (i.e.\  $\langle \Box_{[a,b]} \psi \rangle_i = 1$), the following observation is used.

%
\begin{proposition}\label{prop:boundedTemporalSufficientConditionPositive}
    Let $\varphi \equiv \Box_I \psi$ be an STL formula in NNF,  and $\Theta$ be a conservative valuation of $\psi$ in $\Gamma = (\gamma_0,\dots,\gamma_N)$ on a signal $\sigma$.
    Given $i\in[1,N]$, suppose
    $(\Gamma_i + I) \cap (\gamma_{j-1}, \gamma_{j}] \neq \emptyset$ implies $\langle \psi \rangle_j = 1$ for each $j \in [i, N + 1]$.
Then $\sigma^t \models \varphi$ holds for any $t \in \Gamma_i$. \qed
\end{proposition}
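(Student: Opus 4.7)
The plan is to fix arbitrary $t\in\Gamma_i$ and $t'\in I$ and to establish $\sigma^{t+t'}\models\psi$, which by the semantics of $\Box_I$ yields $\sigma^t\models\varphi$. Setting $s\Defeq t+t'$, I would split the argument into three cases based on where $s$ lies in the partition: (a) $s\in(\gamma_{j-1},\gamma_j]$ for some $j\in[i,N]$; (b) $s=\gamma_{i-1}$; and (c) $s>\gamma_N=T$. Since $t\ge\gamma_{i-1}$ and $t'\ge 0$, these cases are exhaustive.

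In case (a), $s$ itself witnesses $(\Gamma_i+I)\cap(\gamma_{j-1},\gamma_j]\neq\emptyset$, so the hypothesis yields $\langle\psi\rangle_j=1$. Since $s\in(\gamma_{j-1},\gamma_j]\subseteq\Gamma_j$, the definition of a conservative valuation (\cref{def:conservValuation}) then gives $\sigma^s\models\psi$. In case (b), we must have $t=\gamma_{i-1}$ and $t'=0$, hence $0\in I$, and so $\Gamma_i\subseteq\Gamma_i+I$. Then $(\Gamma_i+I)\cap(\gamma_{i-1},\gamma_i]\supseteq(\gamma_{i-1},\gamma_i]\neq\emptyset$, so the hypothesis at $j=i$ gives $\langle\psi\rangle_i=1$. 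Since $\gamma_{i-1}\in\Gamma_i$, the conservative valuation yields $\sigma^{\gamma_{i-1}}\models\psi$.

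In case (c), I would invoke the conventions $\gamma_{N+1}=\infty$ and $\langle\psi\rangle_{N+1}=\langle\psi\rangle_N$: the point $s$ witnesses $(\Gamma_i+I)\cap(\gamma_N,\gamma_{N+1}]\neq\emptyset$, so the hypothesis at $j=N+1$ forces $\langle\psi\rangle_N=1$. Because the signal is extended by $\sigma(u)\Defeq\sigma(T)$ for $u>T$, both $\sigma^s$ and $\sigma^{\gamma_N}$ reduce to the constant signal with value $\sigma(T)$ and therefore satisfy exactly the same STL formulas; combined with $\gamma_N\in\Gamma_N$ and the conservative valuation, this yields $\sigma^s\models\psi$.

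The main obstacle is purely the bookkeeping around endpoints. A naive interval lookup in case (b) would return $j=i-1$, which lies outside the range $[i,N+1]$ of the hypothesis; the fix is to exploit $0\in I$ (forced by $t'=0$) and apply the hypothesis at $j=i$ instead. Case (c) requires faithfully translating the ``virtual'' interval $(\gamma_N,\infty]$ through the signal-extension convention so that satisfaction beyond $T$ reduces to satisfaction at $\gamma_N$. No new machinery beyond \cref{def:conservValuation} and the stated conventions is needed.
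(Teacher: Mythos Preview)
Your argument is correct: the three-case split is exhaustive, each case correctly invokes the hypothesis at an appropriate index $j\in[i,N+1]$, and you use the conservative-valuation property and the signal-extension convention exactly as intended. The paper itself gives no proof for this proposition (it is marked with \qed\ immediately after the statement), so there is nothing to compare against; your write-up simply supplies the routine verification the authors chose to omit.
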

 \cref{prop:boundedTemporalSufficientConditionPositive} leads to the following MILP constraint:
\begin{align*}\small
    \begin{array}{ll}
        \lnot \langle \varphi \rangle_{i} \lor
        (\gamma_{i} + b  \leq \gamma_{j-1}) \lor
        (\gamma_{i - 1} + a  > \gamma_{j}) \lor
        \langle \psi \rangle_{j}
         & \; \text{for each $i\in[1,N]$, $j \in [i, N+1]$.}
    \end{array}
\end{align*}
The constraint itself does not follow the MILP format; we can nevertheless express it in MILP using an auxiliary Boolean variable $Z_f$. Specifically, an inequality $f(x) \geq 0$ in a disjunctive constraint is constrained by  $Z_f = 1 \Rightarrow f(x) \geq 0$.


For the consistency in the negative case (i.e.\  $\langle \Box_{[a,b]} \psi \rangle_i = 0$), the counterpart of \cref{prop:boundedTemporalSufficientConditionPositive} also involves $S^\psi_j$. See below; it leads to an MILP constraint much like \cref{prop:boundedTemporalSufficientConditionPositive} does.

\begin{proposition}\label{prop:boundedTemporalSufficientConditionNegative}
    Suppose $\varphi$, $\sigma$, $\Gamma$, and $\Theta$ are as in \cref{prop:boundedTemporalSufficientConditionPositive}.
    For any $t \in \Gamma_i$,
    $\sigma^t \not\models \varphi^\delta$ holds if the following conditions are satisfied for each $j \in [i, N]$:
    \begin{equation}
        \begin{cases}
            S^\psi_j \leq b  - a                                & \text{if $\gamma_j \in (\gamma_{i-1} + b, \gamma_i + b)$}, \\
            S^\psi_j \leq \gamma_j - \gamma_i - a               & \text{if $\gamma_i + b \in [\gamma_{j-1}, \gamma_j]$},    \\
            S^\psi_N \leq \max(0, \gamma_N - \gamma_{i} - a) & \text{if $\gamma_i + b > \gamma_N$}.
        \end{cases}
    \end{equation}
\end{proposition}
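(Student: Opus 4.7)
The plan is to reduce $\sigma^t\not\models\varphi^\delta$ to the local task of exhibiting, for each $t\in\Gamma_i$, an index $k$ with $\langle\psi\rangle_k=0$ and $\Gamma_k\cap[t+a,t+b]\ne\emptyset$; the conservative-valuation property of $\Theta$ (\cref{def:conservValuation}) then delivers a witness $t^*\in\Gamma_k\cap[t+a,t+b]$ with $\sigma^{t^*}\not\models\psi^\delta$, giving $\sigma^t\not\models\Box_{[a,b]}\psi^\delta=\varphi^\delta$. The witness $k$ will always be supplied by the already-established equivalence for $S^\psi$: if $S^\psi_j\le L$, then there exists $k\in[1,j]$ with $\langle\psi\rangle_k=0$ and $\gamma_j-\gamma_k\le L$.

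Fix $t\in\Gamma_i$. When $t+b\le\gamma_N$, choose the smallest $j\in[i,N]$ with $\gamma_j\ge t+b$; this gives $\gamma_{j-1}\le t+b\le\gamma_j$. If $\gamma_j\in(\gamma_{i-1}+b,\gamma_i+b)$, hypothesis~(i) gives $S^\psi_j\le b-a$, so the extracted $k$ satisfies $\gamma_k\ge\gamma_j-(b-a)\ge t+a$ and, by monotonicity of $\Gamma$, $\gamma_{k-1}\le\gamma_{j-1}\le t+b$, making $\Gamma_k$ meet $[t+a,t+b]$. If instead $\gamma_j\ge\gamma_i+b$, then $\gamma_i+b\in[\gamma_{j-1},\gamma_j]$, so hypothesis~(ii) yields $S^\psi_j\le\gamma_j-\gamma_i-a$, whence $\gamma_k\ge\gamma_i+a\ge t+a$ and the same conclusion follows.

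When $t+b>\gamma_N$ we have $\gamma_i+b>\gamma_N$, activating hypothesis~(iii). If $\gamma_N\ge\gamma_i+a$, the bound $S^\psi_N\le\gamma_N-\gamma_i-a$ again supplies $k$ with $\gamma_k\ge\gamma_i+a\ge t+a$ and $\gamma_k\le\gamma_N<t+b$, placing $\gamma_k$ in $[t+a,t+b]\cap\Gamma_k$. Otherwise $S^\psi_N\le 0$ forces $\langle\psi\rangle_N=0$; invoking the extension convention $\sigma(t')=\sigma(T)$ for $t'>T$ (\cref{def:signal}), any $t^*\in[t+a,t+b]$ with $t^*\ge\gamma_N$ inherits the value $\sigma(T)$, so $\Gamma_N$ itself certifies $\sigma^{t^*}\not\models\psi^\delta$.

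The hardest part is not any individual case but the boundary degeneracy $\gamma_j=\gamma_{i-1}+b$, which can occur only at $t=\gamma_{i-1}$ and slips between hypotheses~(i) and~(ii) for this particular $j$. This is handled by rerunning the argument with index $j+1$, whose $\gamma_{j+1}$ must now lie either in the open interval of~(i) or above $\gamma_i+b$, triggering~(ii); the bounds $\gamma_{k-1}\le\gamma_j=t+b$ and $\gamma_k\ge t+a$ carry over verbatim. A subsidiary subtlety is that the conservative valuation only constrains behavior on $[0,T]$, so the postfix-extension convention for $\sigma$ must be invoked explicitly in the last sub-case of~(iii).
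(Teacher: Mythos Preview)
Your proof is correct and follows the same strategy as the paper's: pick an index near $t+b$, use the $S^\psi$ bound to extract a witness $k$ with $\langle\psi\rangle_k=0$, and check that $\Gamma_k$ meets $[t+a,t+b]$. The paper in fact writes out only the first case and defers the rest with ``similar manner,'' so your treatment is more complete.

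One small difference worth flagging: the paper selects its index via the \emph{half-open} condition $t+b\in[\gamma_{j_t-1},\gamma_{j_t})$, which forces $\gamma_{j_t}>t+b\ge\gamma_{i-1}+b$ strictly and thereby dissolves your boundary degeneracy $\gamma_j=\gamma_{i-1}+b$ at no extra cost. Your own fix for that degeneracy (pass to $j+1$) tacitly assumes $j+1\le N$; when $j=N$ and $\gamma_N=\gamma_{i-1}+b$ you have $\gamma_i+b>\gamma_N$, so it is hypothesis~(iii)---which you already deploy in the $t+b>\gamma_N$ branch---that closes this residual corner, not~(i) or~(ii). This is a one-line patch, not a structural issue.
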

\begin{proof}
    Let $j_t \in [i, N+1]$ be the unique index such that $t + b \in [\gamma_{j_t-1}, \gamma_{j_t})$.
    When $j_t \leq N$ and $\gamma_{j_t} < \gamma_i + b$, we have $\gamma_{j_t} \in (\gamma_{i-1} + b, \gamma_i + b)$ and by assumption $S^\psi_{j_t} \leq b - a$. There is $k \in [1, j_t]$ such that $\langle \psi \rangle_k = 0$ and $\gamma_k \geq \gamma_{j_t} - b + a > t + a$.
    We obtain $\Gamma_{k} \cap (t + [a, b]) \neq \emptyset$ and then $\sigma^t \not\models \varphi^\delta$ holds.
    The other cases can be checked in a similar manner.
    \qed
\end{proof}

\begin{remark}
For  \cref{prop:boundedTemporalSufficientConditionPositive},
the converse of the statement does not hold.
    This is because $\sigma^t \models \psi$
does not guarantee
$\langle \psi \rangle_i\coloneqq\Theta(\psi, i)=1$
where $t \in \Gamma_i$---we allow $\langle \psi \rangle_i = 0$ when $\sigma^t \models \psi \land \lnot \psi^\delta$. It is similar for \cref{prop:boundedTemporalSufficientConditionNegative}.
However, this does not affect the completeness of the encoding (\cref{thm:completeness}): while the converse of \cref{prop:boundedTemporalSufficientConditionPositive} does not hold for \emph{fixed} $\Gamma$, in our workflow we also search for $\Gamma$, in which case it is easily shown that the MILP constraints derived from \cref{prop:boundedTemporalSufficientConditionPositive} are complete. The same is true for \cref{prop:boundedTemporalSufficientConditionNegative}.
\end{remark}

The remaining cases ($\varphi \equiv \psi_1 \UntilOp{I} \psi_2$ and $\varphi \equiv \psi_1 \ReleaseOp{I} \psi_2$) can be reduced to the cases for $\Box_{I}$ and $\Diamond_{I}$. It is by the rewriting techniques shown in~\cite{donzeEfficientRobustMonitoring2013}:
\begin{align}
    \psi_1 \UntilOp{[a, b]} \psi_2 \quad\sim{} \quad
     & \Diamond_{[a,b]}\psi_2 \land \Box_{[0, a]} (\psi_1 \UntilOp{} \psi_2),  \\
    \psi_1 \ReleaseOp{[a, b]} \psi_2 \quad\sim{} \quad
     & \Box_{[a,b]}\psi_2 \lor \Diamond_{[0, a]} (\psi_1 \ReleaseOp{} \psi_2).
\end{align}
These equivalences hold in both Boolean and robust semantics.

\myparagraph{Correctness of Encoding}
Let $\mathbf{Enc}_{\mathbf{STL}}(\varphi, N, T, \delta)$ denote the polyhedron defined by the above MILP constraints. It is correct in the following sense; see also the goal we announced in the beginning of \cref{sec:encodeFullSTL}. Its proof is by induction on $\varphi$.

\begin{lemma}\label{prop:correctnessEncStl}
    Let $\varphi$ be an STL formula in NNF, $N \in \mathbb{N}$, $T > 0$ and $\delta > 0$.
    Given an assignment  $\boldsymbol{v}\colon \mathbf{Var}(\varphi, N)\to \Real$  that lies in  $\mathbf{Enc}_{\mathbf{STL}}(\varphi, N, T, \delta)$,
    let $\Gamma$, $\varsigma$ be the time sequence and the timed state sequence determined by $\boldsymbol{v}$, and define a valuation $\Theta$ by
    $\Theta(\psi, i) \Defeq  \langle \psi \rangle_i
    $ (cf.\ \cref{def:conservValuation}).
    Then $\Theta$ is a conservative valuation of $\varphi$ in $\Gamma$ on the signal $\varsigma^\mathrm{pwl}$.
\end{lemma}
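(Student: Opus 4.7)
The plan is to prove the lemma by structural induction on the NNF formula $\varphi$, with the inductive hypothesis being exactly the statement of the lemma applied to each proper subformula $\psi \in \Sub(\varphi)$. Concretely, I will show that for each $\psi \in \Sub(\varphi)$ and each $i \in [1,N]$: (i) if $\langle \psi \rangle_i = 1$ then $\varsigma^\mathrm{pwl,t} \models \psi$ for all $t \in \Gamma_i$, and (ii) if $\langle \psi \rangle_i = 0$ then $\varsigma^\mathrm{pwl,t} \not\models \psi^\delta$ for all $t \in \Gamma_i$. The constraints \cref{eq:timeSequenceConstraint} ensure $\Gamma$ is a genuine time sequence, so the induction is well-defined.

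For the base case $\psi \equiv p$, constraints \cref{eq:AtomicLinearConstraint1} force $\zeta^p_i$ (respectively $\zeta^{\delta,p}_i$) to record membership of $x_i$ in $\sem{p}$ (respectively $\sem{p^\delta}$). Constraints \cref{eq:AtomicLinearConstraint2} forbid a $\delta$-crossing pair that is not stationary, so \cref{lem:existenceOfDeltaStablePartitioning} yields conservativity of the valuation $\langle p \rangle_i = \zeta^{\delta,p}_{i-1} \lor \zeta^{\delta,p}_i$ on $\varsigma^\mathrm{pwl}$. For Boolean connectives, the shorthand-based encodings express $\langle \psi_1 \land \psi_2 \rangle_i$ and $\langle \psi_1 \lor \psi_2 \rangle_i$ as the Boolean $\land$/$\lor$ of the children's valuations, and conservativity lifts directly: e.g.\ if $\langle \psi_1 \land \psi_2 \rangle_i = 1$, both children's valuations are $1$ at $i$, so by IH each child is true throughout $\Gamma_i$, hence so is the conjunction; if $\langle \psi_1 \land \psi_2 \rangle_i = 0$, then some child's valuation is $0$ at $i$, so by IH that child fails its $\delta$-tightening throughout $\Gamma_i$, hence so does $(\psi_1 \land \psi_2)^\delta$. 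For unbounded $\Diamond, \Box, \UntilOp{}, \ReleaseOp{}$, I will perform a secondary downward induction on $i$ from $N$: the base $i = N$ is immediate from the equations given for $i = N$, and the recursive equations for $i < N$ combine with the IH on the immediate successor's interval in the standard way.

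The main obstacle is the bounded temporal operator case. Here the stability for $\Box_{[a,b]} \psi$ (and dually $\Diamond_{[a,b]} \psi$) is not inherited from that of $\psi$, and the MILP constraints use the auxiliary running-length variables $S^\psi_i, P^\psi_i$. I plan to first verify, by induction on $i$ from $S^\psi_0 = 0$, that the constraints on $S^\psi_i$ force it to equal the supremum of $\gamma_i - \gamma_k$ over indices $k \le i$ such that $\langle \psi \rangle_j = 1$ for all $k < j \le i$, matching the intended ``how long $\psi$ has been true'' semantics. Then the positive side ($\langle \Box_{[a,b]}\psi \rangle_i = 1$) is handled by \cref{prop:boundedTemporalSufficientConditionPositive}: the disjunctive MILP constraint (encoded using auxiliary Booleans $Z_f$) states exactly the hypothesis of the proposition, so combined with IH on $\psi$ I obtain $\varsigma^\mathrm{pwl,t} \models \Box_{[a,b]}\psi$ for all $t \in \Gamma_i$. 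Dually, the negative side uses \cref{prop:boundedTemporalSufficientConditionNegative}, with the constraints on $S^\psi_j$ guaranteeing a ``$\psi$-false'' point in each required subinterval so that $\varsigma^\mathrm{pwl,t} \not\models (\Box_{[a,b]}\psi)^\delta$. For $\psi_1 \UntilOp{[a,b]} \psi_2$ and $\psi_1 \ReleaseOp{[a,b]} \psi_2$, the equivalences quoted at the end of \cref{sec:encodeFullSTL} reduce these cases to bounded $\Box$/$\Diamond$ combined with unbounded $\UntilOp{}$/$\ReleaseOp{}$ already handled; since the equivalences hold in Boolean semantics and the $\delta$-tightening commutes with these rewrites at the level of atomic predicates, both directions of conservativity transfer.

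The delicate point worth highlighting in the write-up is that conservativity admits a ``slack'' region where $\sigma^t \models \psi \land \lnot \psi^\delta$, in which the valuation may be either $0$ or $1$; therefore the induction is not an equivalence at each step, only the two one-sided implications of \cref{def:conservValuation}. This is what makes the proof go through even though the converses of \cref{prop:boundedTemporalSufficientConditionPositive,prop:boundedTemporalSufficientConditionNegative} fail for fixed $\Gamma$.
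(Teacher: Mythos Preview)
Your proposal is correct and follows exactly the approach the paper indicates: the paper's own proof is the single line ``By induction on the structure of $\varphi$,'' and you have faithfully spelled out that induction, invoking \cref{lem:existenceOfDeltaStablePartitioning} for the base case and \cref{prop:boundedTemporalSufficientConditionPositive,prop:boundedTemporalSufficientConditionNegative} for the bounded temporal modalities just as the surrounding text sets them up. One minor imprecision: the constraints on $S^\psi_i$ give only a lower bound (the $\geq$ direction), not an equality, so $S^\psi_i$ need not \emph{equal} the running-true length; but this is harmless, since \cref{prop:boundedTemporalSufficientConditionNegative} only uses upper-bound hypotheses $S^\psi_j \leq L$, and any assignment satisfying such a bound a fortiori has the true running length bounded by $L$.
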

\begin{proof}
By induction on the structure of $\varphi$.
\qed
\end{proof}

We define
$\mathbf{Enc}(\varphi, \mathcal{M}, N, T, \delta)$ by the intersection of $\mathbf{Enc}_{\mathbf{STL}}(\varphi, N, T, \delta)$, the MILP encoding $\mathbf{Enc}_{\mathbf{model}}(\mathcal{M}, N, T)$ of a system model $\mathcal{M}$, and $\langle \varphi \rangle_1 = 1$.

\begin{theorem}[soundness]\label{thm:soundness}
    Let $\varphi$ be an STL formula in NNF, $\mathcal{M}$ be a  model with a time horizon $T$,  $N \in \mathbb{N}$ and $\delta > 0$.
If an assignment
   $\boldsymbol{v}$
    lies in $\mathbf{Enc}(\varphi, \mathcal{M}, N, T, \delta)$, then
    the induced $\varsigma^\mathrm{pwl} $ has  $\varsigma^\mathrm{pwl} \in \mathcal{L}(\mathcal{M})$ and $\sem{\varsigma^\mathrm{pwl}, \varphi} \geq 0$. \qed
\end{theorem}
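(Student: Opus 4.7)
The plan is to decompose the conclusion into its two conjuncts and handle them separately using the machinery already established. The first conjunct $\varsigma^\mathrm{pwl} \in \mathcal{L}(\mathcal{M})$ follows immediately from the fact that $\boldsymbol{v}$ satisfies $\mathbf{Enc}_{\mathbf{model}}(\mathcal{M}, N, T)$, together with the (assumed) soundness of the model encodings to be spelled out in \cref{sec:systemModels}: by construction, any assignment in $\mathbf{Enc}_{\mathbf{model}}(\mathcal{M}, N, T)$ yields a trajectory $\varsigma^\mathrm{pwl}$ which is a (piecewise-linear) trace of $\mathcal{M}$. The STL fragment of the encoding and the extra constraint $\langle \varphi \rangle_1 = 1$ play no role in this step.

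For the second conjunct $\sem{\varsigma^\mathrm{pwl}, \varphi} \geq 0$, I would first restrict attention to the STL part of the encoding and apply \cref{prop:correctnessEncStl}: the valuation $\Theta$ defined by $\Theta(\psi, i) \Defeq \langle \psi \rangle_i$ is a conservative valuation of $\varphi$ in $\Gamma$ on $\varsigma^\mathrm{pwl}$. The extra encoding constraint $\langle \varphi \rangle_1 = 1$ that is built into $\mathbf{Enc}(\varphi, \mathcal{M}, N, T, \delta)$ then forces $\Theta(\varphi, 1) = \top$. By clause~1 of \cref{def:conservValuation}, this implies $(\varsigma^\mathrm{pwl})^t \models \varphi$ for every $t \in \Gamma_1 = [0, \gamma_1]$; taking $t = 0$ (using $\gamma_0 = 0$ from~\cref{eq:timeSequenceConstraint}) yields $\varsigma^\mathrm{pwl} \models \varphi$ in the Boolean semantics.

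Finally, I would bridge from Boolean to robust semantics using the standard correspondence for STL over closed atomic predicates. Concretely, since every atomic proposition $p$ is a closed linear predicate (\cref{asmp:onlyLinearPredicates}, so that $\sigma \models p \iff \pi_p(\sigma) \geq 0$ and $\sem{\sigma, p} = \pi_p(\sigma)$), and since $\varphi$ is in NNF (with $\lnot p$ re-encoded as a fresh atomic $p^-$), a routine induction on the structure of $\varphi$ shows that Boolean satisfaction implies non-negativity of the robust value: $\sigma \models \varphi \Longrightarrow \sem{\sigma, \varphi} \geq 0$. Instantiating this with $\sigma \Defeq \varsigma^\mathrm{pwl}$ completes the proof.

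I do not expect a genuine obstacle here: the technically heavy work is already contained in \cref{prop:correctnessEncStl} and (for the first conjunct) in the per-family arguments of \cref{sec:systemModels}. The only points requiring care are (i) correctly invoking the model encoding's soundness for the particular family in use, and (ii) the boundary behavior in the Boolean-to-robust implication, which is harmless because we use closed half-spaces throughout and only need the non-strict bound $\sem{\varsigma^\mathrm{pwl}, \varphi} \geq 0$.
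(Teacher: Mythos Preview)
Your proposal is correct and matches the paper's intended argument. The paper gives no explicit proof (the theorem is stated with an immediate \qed), and the natural unpacking is exactly what you wrote: membership in $\mathbf{Enc}_{\mathbf{model}}$ gives $\varsigma^{\mathrm{pwl}}\in\mathcal{L}(\mathcal{M})$, \cref{prop:correctnessEncStl} together with the constraint $\langle\varphi\rangle_1=1$ and \cref{def:conservValuation} give $\varsigma^{\mathrm{pwl}}\models\varphi$, and the Boolean-to-robust implication (the $\delta=0$ analogue of \cref{prop:nnfSemantics}) yields $\sem{\varsigma^{\mathrm{pwl}},\varphi}\ge 0$.
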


\begin{theorem}[completeness]\label{thm:completeness}
    Assume the setting of \cref{thm:soundness}.
    If there is piecewise-linear $\sigma \in \mathcal{L}(\mathcal{M})$ such that $\sem{\sigma, \varphi} \geq \delta$,
    there is an assignment $\boldsymbol{v}$ that lies in
    $\mathbf{Enc}(\varphi, \mathcal{M}, N, T, \delta)$ for some $N \in \mathbb{N}$. \qed
\end{theorem}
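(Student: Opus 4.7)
The plan is to construct an explicit assignment $\boldsymbol{v}$ from the witness signal $\sigma$, by first building a sufficiently refined time sequence $\Gamma$ and a conservative valuation $\Theta$ on it, and then reading off the remaining MILP variables. The hypothesis $\sem{\sigma,\varphi}\ge \delta$ is strong enough that, for every subformula $\psi\in \Sub(\varphi)$, the Boolean truth value $t\mapsto (\sigma^t\models \psi)$ and the ``$\delta$-robust'' version $t\mapsto (\sigma^t\models \psi^\delta)$ coincide up to a set of measure zero, which is precisely what is needed for a conservative valuation to exist.

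First I would fix $N$ and $\Gamma=(\gamma_0,\dots,\gamma_N)$ by iterated refinement. Since $\sigma$ is piecewise-linear, each atomic predicate $p\in\AP(\varphi)$ is finitely variable on $\sigma$ (\cref{lem:finiteVariabilitySTL}), and by \cref{lem:existenceOfDeltaStablePartitioning} we can add the breakpoints of $\sigma$ together with the (finitely many) crossing times of $\pi_{p}(\sigma(\cdot))$ with the thresholds $0$ and $\delta$ into $\Gamma$, making every $\delta$-crossing pair stationary. Next, for each bounded-temporal subformula $\Box_{[a,b]}\psi$ or $\Diamond_{[a,b]}\psi$ occurring in $\varphi$, I would further refine $\Gamma$ by adding all time points of the form $\gamma_j\pm a$ and $\gamma_j\pm b$ lying in $[0,T]$. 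This refinement is the standard way to align windows with interval endpoints so that the sufficient conditions of \cref{prop:boundedTemporalSufficientConditionPositive,prop:boundedTemporalSufficientConditionNegative} become tight rather than merely sufficient. The procedure terminates because each refinement step only adds finitely many new breakpoints, yielding a finite $N$.

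Having fixed $\Gamma$, I would set $x_{i,v}\Defeq \sigma(\gamma_i)(v)$, which makes $\varsigma^{\mathrm{pwl}}=\sigma$ by piecewise-linearity; the auxiliary variables $\zeta^p_i,\zeta^{\delta,p}_i$ are set according to whether $\sigma(\gamma_i)\in \sem{p}$ and $\sigma(\gamma_i)\in\sem{p^\delta}$, and $\langle \psi\rangle_i$ is defined recursively as the conservative valuation of $\psi$ in $\Gamma$ on $\sigma$ (whose existence is ensured by the refinement in the previous paragraph and by $\sem{\sigma,\varphi}\ge \delta$ via \cref{prop:nnfSemantics}). The variables $S^\psi_i,P^\psi_i$ are computed from $\Gamma$ and $\langle \psi\rangle_\bullet$ by the recurrences that the MILP constraints mandate. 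Finally, the model-side variables fulfil $\mathbf{Enc}_{\mathbf{model}}(\mathcal{M},N,T)$ by the standing assumption that $\mathbf{Enc}_{\mathbf{model}}$ is a sound and complete encoding of $\mathcal{L}(\mathcal{M})$ for piecewise-linear signals, i.e.\ $\sigma\in\mathcal{L}(\mathcal{M})$ is realisable in the encoding.

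It remains to verify that this $\boldsymbol{v}$ lies in $\mathbf{Enc}(\varphi,\mathcal{M},N,T,\delta)$. The constraints~\cref{eq:timeSequenceConstraint,eq:AtomicLinearConstraint1,eq:AtomicLinearConstraint2} and the Boolean and unbounded-temporal recurrences follow directly from the construction of $\Gamma$ and of $\Theta$ as a conservative valuation. The top-level constraint $\langle\varphi\rangle_1=1$ follows from $\sem{\sigma,\varphi}\ge\delta$ combined with \cref{prop:nnfSemantics} applied at $t=0\in\Gamma_1$. The main obstacle, and the one deserving the most care, is the bounded-temporal case: the constraints derived from \cref{prop:boundedTemporalSufficientConditionPositive,prop:boundedTemporalSufficientConditionNegative} are only sufficient for a \emph{fixed} partition, and it is precisely the alignment refinement (adding $\gamma_j\pm a,\gamma_j\pm b$) that collapses the sufficient conditions to equivalences, so that the required disjunctive inequality $(\gamma_i+b\le \gamma_{j-1})\lor(\gamma_{i-1}+a>\gamma_j)\lor \langle \psi\rangle_j$ can always be satisfied by an appropriate choice of the auxiliary Boolean $Z_f$ selector. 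Once this is checked for all $i,j$, the positive and negative bounded-temporal constraints are met, and the $\mathcal{U}_{I},\mathcal{R}_{I}$ cases follow by the rewriting equivalences, completing the proof.
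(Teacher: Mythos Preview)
Your overall strategy---build a $\delta$-stable time sequence $\Gamma$ for $\sigma$, read off the MILP variables from $\Gamma$, $\sigma$, and the semantics, then check the constraints---is exactly the approach the paper takes (its own argument is a two-line sketch: choose $\Gamma$ $\delta$-stable for all $\psi\in\Sub(\varphi)$, then induce values for $\mathbf{Var}(\varphi,N)$). Your elaboration of the bounded-temporal case, via alignment refinement by $\gamma_j\pm a,\gamma_j\pm b$, is the right idea and matches what \cref{prop:boundedTemporalSufficientConditionPositive,prop:boundedTemporalSufficientConditionNegative} and the subsequent remark implicitly require.

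Two points need repair. First, the opening claim that $\sem{\sigma,\varphi}\ge\delta$ forces, for \emph{every} subformula $\psi$ and every $t$, the truth of $\sigma^t\models\psi$ and $\sigma^t\models\psi^\delta$ to coincide almost everywhere is false and, more importantly, unnecessary: $\sem{\sigma,\varphi}\ge\delta$ is a statement about $\varphi$ at $t=0$ only and says nothing about, say, a subformula $\psi$ at times where $\psi$ does not contribute to the top-level robustness. The existence of a conservative valuation comes purely from $\sigma$ being piecewise-linear (via \cref{lem:existenceOfDeltaStablePartitioning} and the inductive refinement you describe), not from the robustness hypothesis.

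Second, your derivation of $\langle\varphi\rangle_1=1$ invokes \cref{prop:nnfSemantics} in the wrong direction: that proposition says $\sigma\models\varphi^\delta\Rightarrow\sem{\sigma,\varphi}\ge\delta$, whereas you need the converse. What you actually require is that $\sem{\sigma,\varphi}\ge\delta$ implies $\sigma^0\models\varphi^\delta$, so that on a $\delta$-stable $\Gamma_1$ the alternative ``$\sigma^t\not\models\varphi^\delta$ for all $t\in\Gamma_1$'' is ruled out. For piecewise-linear signals on a bounded horizon this converse does hold (the relevant suprema and infima in the robust semantics are attained by continuity and compactness), but it is not the content of \cref{prop:nnfSemantics} and should be stated and justified separately.
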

\begin{auxproof}
    \begin{proof}
        There is $N$ and $\Gamma = (\gamma_0, \dots, \gamma_N)$ that is $\delta$-stable for $\sigma$ and $\psi$ for all $\psi \in \Sub(\varphi)$.
        We can induce values for the variables $\mathbf{Var}(\varphi, N)$ from $\Gamma$, $\sigma$, and $\varphi$. \qed
    \end{proof}
\end{auxproof}

\section{System Models and Their
  MILP Encoding}\label{sec:systemModels}
We introduce  the MILP encoding $\mathbf{Enc}_{\mathbf{model}}(\mathcal{M}, N, T)$ for some families of models $\mathcal{M}$.
We introduce an exact encoding for \emph{rectangular hybrid automata (RHAs)}, and an approximate one for \emph{HAs with closed-form solutions}.
We also introduce a refinement of the latter---it is more precise and efficient---restricting to \emph{double integrator dynamics}. The last is useful for automotive examples such as \cref{ex:introLead}.

An encoding of RHAs to MILP is not hard, so we defer it to \cref{sec:rectHA}. We focus on the other two families.




\subsection{HAs with Closed-Form Solutions}\label{sec:closedFormSolution}

\begin{wrapfigure}[7]{r}[0pt]{4.5cm}
    \centering

\vspace{-4em}
\begin{tikzpicture}[scale=.4]
        \begin{axis}
            [
                xlabel={$t$},
                ylabel={$f(t)$},
                label style={font=\Large},
                grid,
                xmin=0,
                xmax=4.5,
                ymin=-1.5,
                ymax=4.3
            ]
            \addplot[blue, no marks, solid, domain=0:4.5, samples=50, line width = 1mm] {-(x-2)^2+4};
            \addplot[red, no marks, solid,line width = 1mm] table[col sep=tab]{fig/plot.csv};
        \end{axis}
    \end{tikzpicture}

    \vspace{-1em}
    \caption{MILP encoding of $f(t)$}
    \label{fig:closedForm}
\end{wrapfigure}
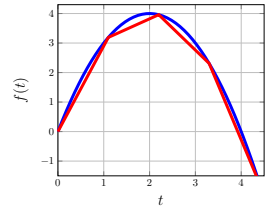
Here we are interested in hybrid automata (HAs) whose continuous flow dynamics at each control mode has a closed-form solution. The basic idea is simple and it is illustrated in \cref{fig:closedForm}, where the solution $f(t)$ of dynamics (blue) is approximated by a piecewise linear function (red). Such MILP encoding
is standard; see e.g.~\cite{asghariTransformationLinearizationTechniques2022}.

We  formalize this intuition.
 Firstly, to accommodate input signals $\tau\in\Signal_{V'}$ (\cref{def:sysModelTraceSet}), we extend the HA definition  so that some variables $x^{\mathrm{in}}$ can be designated to be \emph{input variables}. This means that there are no ODEs whose left-hand side is $\dot{ x^{\mathrm{in}}}$, and that the variable updates associated with mode transitions never change $x^{\mathrm{in}}$.

Then the above ``closed-form solution'' assumption on an HA $\mathcal{H}$ is precisely described as follows.
Let $\vec{x^{\mathrm{in}}}=(x^{\mathrm{in}}_{1},\dotsc,x^{\mathrm{in}}_{k})$ enumerate $\mathcal{H}$'s input variables, and $\vec{x}=(x_{1},\dotsc,x_{l})$ enumerate its other variables.
We assume that, for the flow dynamics at each control mode $u$, there is a \emph{closed-form solution}
\begin{equation}\label{eq:closedForm}
    \vec{x}(t)
    \;=\;
    f_{u}(t,\vec{x^{\mathrm{in}}},\vec{x_{0}})
    \qquad
    \begin{minipage}[t]{.6\textwidth}
        such that, for each $t_{0}\in\Real_{\ge 0}$,  $f_{u}(t_{0},\vec{x^{\mathrm{in}}},\vec{x}_{0})$ is a linear function over the variables $\vec{x^{\mathrm{in}}},\vec{x}_{0}$.
    \end{minipage}
\end{equation}
Here, the variable $t$ is the elapsed time since the arrival at the current control mode $u$; the variables $\vec{x^{\mathrm{in}}}$ refer to the input variables (their values are assumed to be constant within the same mode); and the variables $\vec{x_{0}}$ refer to the \emph{initial} values of $\vec{x}$ on the arrival at $u$. The assumption holds in many examples, such as polynomial dynamics.

Let us motivate the assumption.  A closed-form solution $f_{u}$ helps precision: in piecewise linear  approximation such as in \cref{fig:closedForm}, errors do not accumulate over time; in contrast, if a closed-form solution is not given,  our alternative will be numerical integration e.g.\ by the Euler method, where errors accumulate. The linearity assumption in \cref{eq:closedForm} is there for MILP encoding; see below.

Our approximate MILP encoding poses the closed-form solution assumption and follows the intuition of  \cref{fig:closedForm}. Specifically, 1) it fixes a constant $\varDelta t\in\Real_{\ge 0}$ as a sampling interval; 2) it obtains a family
\begin{math}
    \bigl(\,
    f_{u}(k\cdot\varDelta t, \vec{x^{\mathrm{in}}},\vec{x}_{0})
    \,\bigr)_{k}
\end{math}
of linear functions over the variables $\vec{x^{\mathrm{in}}},\vec{x}_{0}$; and 3) the value of $\vec{x}$ at the elapsed time $t$ is expressed by the linear interpolation
\begin{equation}\label{eq:closedFormApprox}
    \textstyle
    \frac{(k+1)\varDelta t -t}{\varDelta t}
    f_{u}(k\varDelta t, \vec{x^{\mathrm{in}}},\vec{x}_{0})
    +
    \frac{t-k\varDelta t}{\varDelta t}
    f_{u}\bigl((k+1)\varDelta t, \vec{x^{\mathrm{in}}},\vec{x}_{0}\bigr),
\end{equation}
where $k$ is such that $k\varDelta t\le t \le (k+1)\varDelta t$. This encoding of flow dynamics is combined with the HA structure, much like in \cref{sec:rectHA}, yielding an approximate MILP encoding of the whole HA\@.

The above encoding has two sources of numerical errors.
One  is  \emph{linear interpolation}. Errors caused by it are  illustrated  in \cref{fig:closedForm} as the vertical margin between blue and red.

The other source is  \emph{binary expansion}~\cite{gloverImprovedLinearInteger1975,gupteSolvingMixedInteger2013}, a standard MILP technique for encoding bilinear functions.
Indeed, in~\cref{eq:closedFormApprox},  $t, \vec{x^{\mathrm{in}}},\vec{x}_{0}$ are all continuous variables in MILP, and the expression~\cref{eq:closedFormApprox} can contain their products. The linearity assumption in~\cref{eq:closedForm} has been posed to restrict~\cref{eq:closedFormApprox} to bilinear.





\subsection{HAs with Double Integrator Dynamics}\label{sec:piecewiseConstAccel}
Our next focus is a special case of the model family of \cref{sec:closedFormSolution}, where each continuous flow is \emph{double integrator dynamics}. This is important because 1) it gets rid of one of the two error sources in \cref{sec:closedFormSolution}, namely  linear interpolation, by the \emph{trapezoidal rule}, and 2) it can be used for many automotive dynamics models (cf.\ \cref{ex:introLead}).

The \emph{trapezoidal rule} is a basic technique in numerical integration~\cite{Atkinson89}, where $\int_{a}^{b}g(t)\,\mathrm{d}t$ is approximated by $(b-a)\frac{g(a)+g(b)}{2}$. For double integrator dynamics, we apply the trapezoidal rule to the velocity $v$, and it is exact since $v$'s evolution is linear. This allows us to express the position $x$ in the bilinear form
\begin{math}
    x= t\cdot\frac{v_{0} + v}{2}
\end{math}, using the variables $t$ (elapsed time), $v_{0}$ (initial velocity), and $v$ (current velocity). Thus we can  dispose of  the sampling points and their interpolation~\cref{eq:closedFormApprox} in \cref{sec:closedFormSolution}.

We note that this specific modeling is still approximate since the second error source in \cref{sec:closedFormSolution}, namely binary expansion, remains. Nevertheless, it is more precise and efficient (piecewise linear approximation in \cref{sec:closedFormSolution} is costly, too). We exploit this encoding for our automotive case studies such as \cref{ex:introLead}.

\section{Implementation and Experiments}\label{sec:expr}
We implemented, in Python, our MILP encodings of the STL semantics  (\cref{sec:encodeSTL}) and two model families, namely RHAs (\cref{sec:rectHA}) and double integrator dynamics (\cref{sec:piecewiseConstAccel}; multiple modes are not supported since  our benchmarks do not need them). The hyperparameter $\delta$ in our encoding is fixed at $0.1$ for all benchmarks. The resulting MILP constraints are solved by Gurobi Optimizer~\cite{gurobi}. This prototype implementation is called \emph{STLts}---STL trace synthesizer.

Our experiments are designed to address the following research questions.
\begin{description}
    \item[RQ1] Assess the effect of  variability bounds $N$ (\cref{prob:traceSynthesis}) on the performance.
    \item[RQ2] Compare the performance of STLts with optimization-based falsification, and with SMT-based model checking.
    \item[RQ3] Assess the performance  of STLts for   real-world complex scenarios.
    \item[RQ4] Assess the performance of STLts in parameter mining (\cref{prob:existParMining}).
\end{description}

We used three classes of benchmarks: \emph{rear-end near collision (RNC)}, \emph{navigation (NAV)}, and \emph{disturbance scenarios in ISO 34502 (ISO)}. In each class, we have multiple STL specs, resulting in benchmarks such as RNC1, RNC2, etc.

\myparagraph{Rear-End Near Collision (RNC1--3)}
As discussed in \cref{ex:introLead}, these automotive benchmarks are simplifications of the ISO benchmarks  below. The spec $\mathtt{RNC1}$ is presented in \cref{ex:introLead}. The system model~\cref{eq:introLeadModel} (see also
 $\MRNC$ in \cref{ex:introLeadModel}) is double integrator dynamics (\cref{sec:piecewiseConstAccel}) and is shared by the benchmarks RNC1--3.

The other two specs $\mathtt{RNC2}, \mathtt{RNC3}$ are defined as follows, using formulas in~\cref{eq:RNC1}:
\begin{align}\label{eq:RNC23}
    \begin{array}{rll}
        \mathtt{RNC2}      & \quad:\equiv\quad & \bigl(\Box (x_{\front} - x_{\rear} \ge 0)\bigr) \land
        \\
                           &                   & \quad
        \Diamond_{[0, 9]} \bigl(
        (\Box_{[0,1]} \mathtt{danger} ) \land
        (\Box_{[0,1]} a_{\rear} \ge 1 ) \land
        (\Diamond_{[1,5]} \lnot\mathtt{danger})
        \bigr)                                                                                                                                                             \\
        \mathtt{trimming2} & \quad:\equiv\quad & (\Diamond \mathtt{danger}) \Rightarrow \bigl((\Box_{[0, 1]} a_{\rear} \geq 1) \mathbin{\mathcal{U}} \mathtt{danger}\bigr)
        \\
        \mathtt{RNC3}      & \quad:\equiv\quad & \Box(\mathtt{dyn\_inv}\land\mathtt{trimming2} ) \land \Diamond_{[0, 9]}\Box_{[0,1]}\mathtt{danger}
    \end{array}
\end{align}

\begin{wrapfigure}[15]{r}[0pt]{5.5cm}
    \centering

    \vspace{-1em}
    \centering
    \resizebox{0.5\textwidth}{!}{
        \begin{tikzpicture}
            \draw[line width=1.5pt] (0,10) rectangle (10,0);
            \draw[line width=1.5 pt] (0,5) -- (5,5);
            \draw[line width=1.5 pt] (5,4) -- (10,4);
            \draw[line width=1.5 pt] (5,0) -- (5,10);
            \node at (-0.7, 10) {$(0,10)$};
            \node at (-0.7, 0) {$(0,0)$};
            \node at (10.7, 0) {$(10,0)$};
            \node at (10.7, 10) {$(10,10)$};
            \node at (-0.7,5) {$(0,5)$};
            \node at (10.7,4) {$(10,4)$};
            \node at (5,-0.3) {$(5,0)$};
            \node at (5,10.3) {$(5,10)$};
            \node at (3,-0.3) {$(3,0)$};

            \draw[line width=3pt, color = blue] (0,0) -- (3,0); %
            \node[align=center,color=blue] at (1,0.5) {Initial \\[-0.1cm] positions};
            \node[color=purple] at (2.5 ,0.5) {\Large \faRobot};

            \node at (0.5, 9.5) {\Large $\ell_1$};
            \node[align=left, font=\normalsize] at (2.5, 7.5) {$\dot{x}=1$ \\ $\dot{y} \in [0.1,2]$ \\ $x\in[0,5]$ \\ $y \in [5,10]$};

            \node at (5.5, 9.5) {\Large $\ell_2$};
            \node[align=left, font=\normalsize] at (7.5, 7.5) {$\dot{x} \in [0.1, 2]$ \\ $\dot{y} = -1$ \\ $x\in[5,10]$ \\ $y \in [4,10]$};

            \node at (5.5, 3.5) {\Large $\ell_3$};
            \node[align=left, font=\normalsize] at (7.5, 2.5) {$\dot{x} = -1$ \\ $\dot{y} \in [-2, -0.1]$ \\ $x\in[5,10]$ \\ $y \in [0,4]$};

            \node at (0.5, 4.5) {\Large $\ell_4$};
            \node[align=left, font=\normalsize] at (2.5, 2.5) {$\dot{x} \in [-2,-0.1]$ \\ $\dot{y}=1$ \\ $x\in[0,5]$ \\ $y \in [0,5]$};

            \fill[fill=red!30, fill opacity=0.5] (9,10) rectangle (10,0); 
            \node[align=center, color=red] (A) at (8,6) {Unsafe region};
            \draw[->] (A) -- (9.1,5.5);
            \fill[fill=teal!40, fill opacity=0.5] (4,5) rectangle (6,2); 
            \node[align=center, color=teal] (B) at (6,1) {Goal region};
            \draw[->] (B) -- (5.5, 2.1);

            \draw[dashed,color=purple] (3.0,0) -- (2.80,2.02) -- (0.02,5.00) -- (2.00,9.00) -- (4.01,9.90) -- (5.00,10.00) -- (5.99,4.99) -- (7.97,4.00) -- (5.99,2.01) -- (5.00,0.03) -- (4.11,2.02) -- (4.01,3.00) -- (0.02,5.00) -- (4.02,7.23) -- (5.00,9.19) -- (5.98,5.09) -- (5.99,4.99) -- (7.89,4.0) -- (5.98,3.81) -- (5.0,2.01) -- (4.02,4.99);
        \end{tikzpicture}
    }

    \vspace{-1em}
    \caption{The RHA $\MNAV$ for NAV1--2}
    \label{fig:example_nav}
\end{wrapfigure}
\myparagraph{Navigation (NAV1--2)}
Here we use a system model that adapts NAV-2 from \cite{duggiralaAbstractionRefinementStability2011}. The latter is  a standard example of an RHA, used e.g.\ in~\cite{buARCHCOMP22CategoryReport}.

Our system model $\MNAV$ is an RHA that describes the motion of a point robot in a $2 \times 2$ grid where each region has a rectangular vector field, with a time horizon $T = 40$. See \cref{fig:example_nav}.
We have $4$ regions $\ell_1,\dotsc,\ell_4$, each associated with rectangular bounds for $\dot{x},\dot{y}$ and invariants; besides, we set an unsafe region $\mathtt{unsafeR}$ ($x\in [9,10]$) and a goal region $\mathtt{goalR}$ ($x\in[4,6] \wedge y\in[2,5]$). The robot starts from an initial position $(x_0,y_0)$ where $x_0\in[0,3]\wedge y_0=0$.


\begin{auxproof}
    The model describes the motion of a point robot in an $n \times n$ grid where each region is associated with a rectangular vector field. The robot can move in and between the fields according to the flow equations of them. It has been a standard benchmark for rectangular hybrid automata~\cite{buARCHCOMP22CategoryReport}. As shown in~\cref{fig:example_nav}, we modified the NAV-2 from \cite{duggiralaAbstractionRefinementStability2011}. The grid has been divided into $4$ regions $\ell_1,\ell_2,\ell_3$ and $\ell_4$. The motion of the robot is according to the flow equations in each region. For instance, in $\ell_1$, the dynamics $\dot{x}=1$ and $\dot{y}\in[0.1,2]$ restrict the robot behaviors in horizontal ($x$) and vertical ($y$) directions respectively, and $x\in[0,5]$ and $y\in[5,10]$ are the invariants. In the grid, there are an unsafe region $\mathtt{unsafeR}$ ($x\in [9,10]$) and a goal region $\mathtt{goalR}$ ($x\in[4,6] \wedge y\in[2,5]$). The robot starts from an initial position $(x_0,y_0)$ where $x_0\in[0,3]\wedge y_0=0$. We consider two specifications presented with the corresponding STL formulas in \cref{tab:NAV_specification}.
\end{auxproof}

We consider two specs: $\mathtt{NAV1}\,:\equiv\,\Diamond (\Box_{[0,3]} ((x, y) \in \mathtt{goalR})) \land \Box( x \not\in \mathtt{unsafeR})$ and $\mathtt{NAV2}\,\equiv\,\Box((x,y) \in \ell_3 \to \Diamond_{[0,3]} (x,y) \in \ell_4)$.
$\mathtt{NAV1}$ is almost a standard reach-avoid constraint, but it additionally requires the \emph{persistence} to the goal region for three seconds. Such specifications are  not accommodated in many control and model checking frameworks specialized in reach-avoid constraints (see e.g.~\cite{buARCHCOMP22CategoryReport}).
$\mathtt{NAV2}$ is a \emph{response specification}---the trigger (being in $\ell_3$) must be responded by moving to $\ell_4$ within a three-second deadline. Such specs are common in manufacturing; see e.g.~\cite{ZhangHA19}.

\clearpage

\begin{wraptable}[11]{r}[0pt]{6cm}
    \vspace{-1em}
    \caption{Disturbance scenarios in the ISO 34502 standard.  Table from~\cite{ISO34502}}
    \label{table:iso}
    \centering
    \includegraphics[width=.5\textwidth]{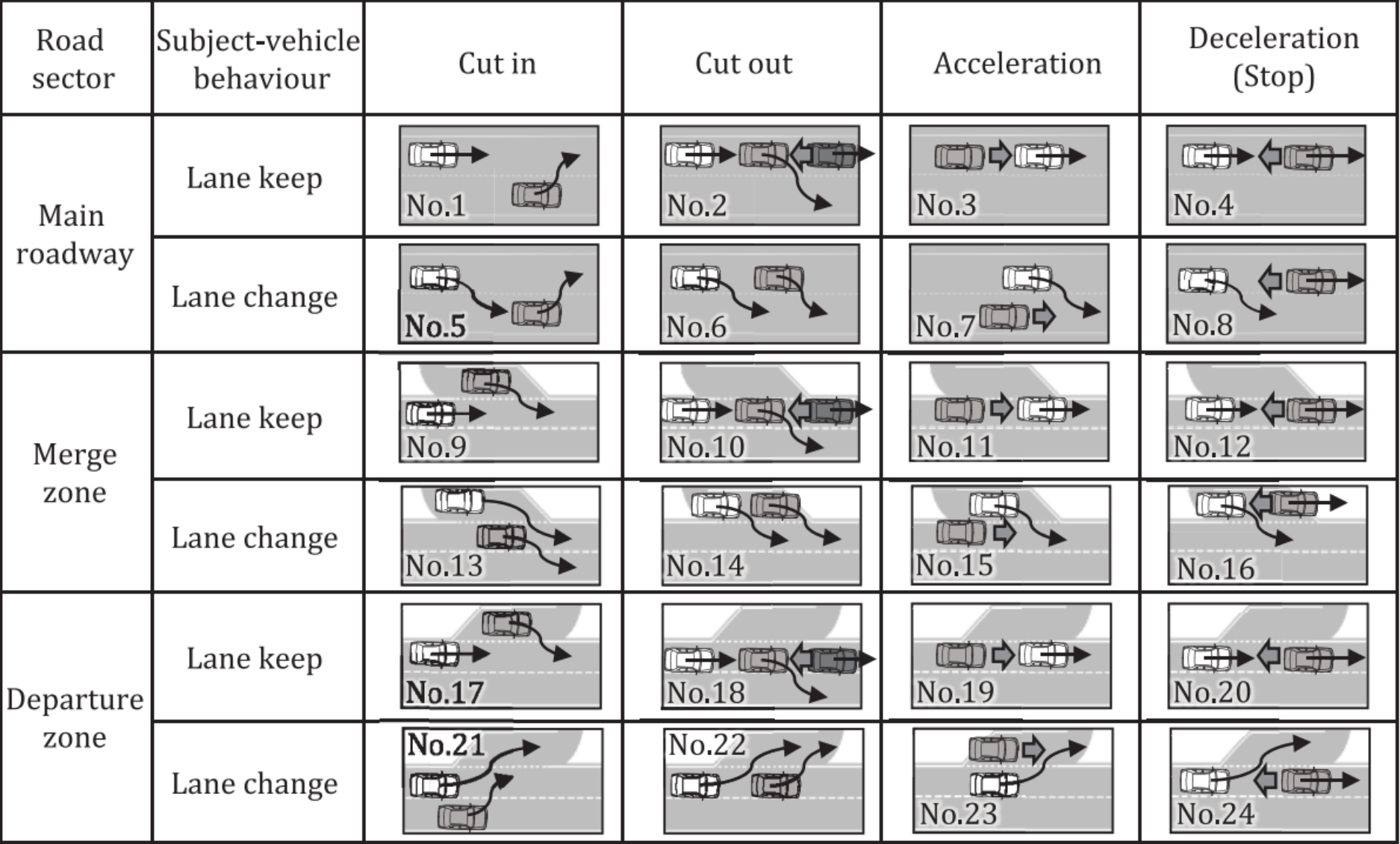}

\end{wraptable}
\myparagraph{ISO 34502 Disturbance Scenarios for Automated Driving (ISO1, ISO3, $\dotsc$, ISO8)}
These benchmarks motivated the current work. As discussed in \cref{sec:intro} (see \cref{ex:introLead}), we obtained in~\cite{ReimannMHBCSWAHUY24_toAppear} complex STL specs as the formalization of the disturbance scenarios in the ISO 34502 standard, but in our illustration efforts by trace synthesis, we found that existing techniques such as optimization-based falsification struggle.

In our experiments, the system model is similar to $\MRNC$ (\cref{ex:introLead,ex:introLeadModel}), while lateral dynamics is added and the time horizon is $10$ time units here.
As for specs, we use seven STL specs $\mathtt{ISO1}, \mathtt{ISO3},\dotsc,\mathtt{ISO8}$; these are obtained in~\cite{ReimannMHBCSWAHUY24_toAppear} as the formalization of the \emph{disturbance scenarios No.\ 1,3,$\dotsc$,8}  in the ISO 34502 standard for automated driving vehicles. See \cref{table:iso}.  Scenario No.\ 2 was omitted in~\cite{ReimannMHBCSWAHUY24_toAppear} since it involves three vehicles; we omit Scenarios No.\ 9--24 since they are the same with No.\ 1--8 except in the road shape.

Specifically, the specs $\mathtt{ISO}i$ follow the common format shown below~\cite{ReimannMHBCSWAHUY24_toAppear}:
\begin{displaymath}\small
    \begin{array}{rcl}
        \mathtt{ISO}{i}
         & \;\equiv\; & \mathtt{initSafe}
        \wedge \mathtt{disturb}_i,
        \qquad                            \\
        \mathtt{disturb}_i
         & \;\equiv\; &
        \mathtt{initialCondition}_i
        \wedge \mathtt{behaviourSV}_i
        \wedge \mathtt{behaviourPOV}_i
    \end{array}
\end{displaymath}
where SV refers to the subject (``ego'') vehicle and POV refers to the principal other vehicle. The component formulas $ \mathtt{initialCondition}_i$, $\mathtt{behaviourSV}_i$ and $ \mathtt{behaviourPOV}_i$ vary for different scenarios (No.\ $i$). Going into their definitions are beyond the scope of this paper; we highlight $\mathtt{ISO5}$ as an example to demonstrate the complexity of the specs $\mathtt{ISO}{i}$.
\begin{equation}\label{eq:ISOcomplex}\small
    \begin{array}{rcl}
        \mathtt{initialCondition}_5
         & \;\equiv\; & \top
        \qquad\qquad\qquad
        \mathtt{behaviourSV}_5
        \;\equiv\;  \mathtt{leavingLane}(\mathtt{SV},L)
        \\
        \mathtt{behaviourPOV}_5
         & \;\equiv\; & \mathtt{cutIn}(\mathtt{POV},\mathtt{SV})
        \\
        \mathtt{leavingLane}(a,L)
         & \;\equiv\; &
        \mathtt{atLane}(a,L) \wedge \Diamond(\neg\mathtt{atLane}(a,L))
        \\
        \mathtt{cutIn}(\mathtt{POV},\mathtt{SV},L)
         & \;\equiv\; &
        \neg\mathtt{sameLane}(\mathtt{POV},\mathtt{SV},L)
        \wedge \Diamond\bigl(\mathtt{danger}(\mathtt{SV},\mathtt{POV})
        \\&&
        \!\!\!\!\!\!   \wedge \Diamond_{[0,\mathtt{minDanger}]}(\mathtt{sameLane}(\mathtt{SV},\mathtt{POV},L)  \wedge \mathtt{aheadOf}(\mathtt{SV},\mathtt{POV}))\bigr)
        \\
        \mathtt{danger}(\mathtt{SV},\mathtt{POV})
         & \;\equiv\; &
        \Box_{[0,\mathtt{minDanger}]} \mathtt{rssViolation}(\mathtt{SV},\mathtt{POV})
    \end{array}
\end{equation}
The formulas not defined here are suitably defined atomic propositions.

\myparagraph{Experiment Settings}
Our implementation \emph{STLts} is compared with the following tools: 1) a widely used optimization-based falsification tool \emph{Breach}~\cite{Donze10}; 2) another falsification tool \emph{ForeSee}~\cite{foreSeeByZhenya,ZhangLAMHZ21}  that emphasizes optimized treatment of Boolean connectives in STL; 3) an MILP-based STL optimal control tool \emph{bluSTL}~\cite{donzeBluSTLControllerSynthesis}; and 4) \emph{STLmc}, an SMT-based bounded STL model checker~\cite{YuLB22}.

The experiments were conducted on an Amazon EC2 c4.4xlarge instance (2.9 GHz Intel Xeon E502666 v3, 30.0GB RAM) running Ubuntu Server 20.04.

\clearpage

\myparagraph{RQ1: the Effect of the Variability Bound $N$}

\begin{wrapfigure}[7]{r}[0pt]{4.5cm}
    \centering

    \vspace{-4em}
    \includegraphics[width=4cm]{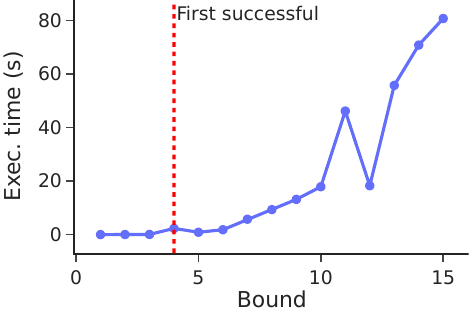}

    \vspace{-1em}
    \caption{Execution time of STLts for different var.\ bd.\ $N$, on $\mathtt{ISO6}$}
    \label{fig:runtimeVaryingN}
\end{wrapfigure}
\noindent There is an obvious trade-off about the choice of a variability bound $N$ (\cref{prob:traceSynthesis}): bigger $N$ means the search is more extensive, but it incurs greater computational cost.

This tendency is confirmed in our experiments;  the result for the $\mathtt{ISO6}$ benchmark is in \cref{fig:runtimeVaryingN} for illustration. Here,  synthesis was successful for $N=4$ for the first time.

We also observe in the figure that computational cost is low when trace synthesis is unsuccessful. This suggests the following strategy: we start with small $N$ and increment it if trace synthesis is unsuccessful. We might waste time by trying too small $N$'s; but the wasted time should be small.

\begin{wraptable}[15]{r}[0pt]{5.5cm}

    \vspace{-3em}
    \caption{Experimental results for trace synthesis, showing execution time (seconds). $(N)$ for STLts is the first successful bound. Timeout (t/o) is 600 sec. }
    \label{table:experimentalResults}
    \scalebox{.7}{\begin{tabular}{lrrrrrr}\toprule
                                        & \multicolumn{2}{c}{STLts}  & Breach
                                        & ForeSee
                                        & bluSTL
                                        & STLmc
            \\\midrule
            $\mathtt{RNC1}$\phantom{ho} & \cellcolor{LightCyan} 0.1 & (3)    & 59.4                   & 546.8                          & $(\P)$                        & t/o
            \\
            $\mathtt{RNC2}$             & \cellcolor{LightCyan}0.3  & (4)    & 9.3                   & 104.3                       & 14.3                          & t/o
            \\
            $\mathtt{RNC3}$             & \cellcolor{LightCyan}0.1  & (3)    & 81.3                   & 197.4                          & $(\P)$                        & t/o
            \\\midrule
            $\mathtt{NAV1}$             & 32.5                    & (17)   & \multirow{2}{*}{$(*)$} & \multirow{2}{*}{$(*)$}       & \multirow{2}{*}{$(\ddagger)$} & \cellcolor{LightCyan}16.5
            \\
            $\mathtt{NAV2}$             & \cellcolor{LightCyan}2.1   & (11)   &                        &                              &                               & 10.0
            \\\midrule
            $\mathtt{ISO1}$             & \cellcolor{LightCyan}0.4 & (3)    & 8.9                    & t/o & \multirow{7}{*}{$(\dagger)$}  & \multirow{7}{*}{$(\dagger)$}
            \\
            $\mathtt{ISO3}$             & \cellcolor{LightCyan}0.2  & (2)    & t/o                    &   t/o                           &                               &
            \\
            $\mathtt{ISO4}$             & \cellcolor{LightCyan}0.4  & (2)    & t/o                    &        t/o                      &                               &
            \\
            $\mathtt{ISO5}$             & \cellcolor{LightCyan}9.9  & (4)    & 31.2                   &           435.8                   &                               &
            \\
            $\mathtt{ISO6}$             & \cellcolor{LightCyan}2.4  & (4)    & t/o                    &                58.9              &                               &
            \\
            $\mathtt{ISO7}$             & \cellcolor{LightCyan}0.6  & (3)    & 33.6                   &              187.2               &                               &
            \\
            $\mathtt{ISO8}$             & \cellcolor{LightCyan}1.5  & (3)    & 38.8                   &                t/o              &                               &
            \\\bottomrule
        \end{tabular}}
\end{wraptable}
\myparagraph{Experimental Results, Overview}
Our experimental results are in summarized in \cref{table:experimentalResults}, where the best performers are highlighted by color.

We explain the missing entries. In $(*)$, the tool is not applicable due to the nondeterminism of the benchmark. In $(\dagger)$, we did not conduct experiments since the performance comparison with STLts is already clear with simpler $\mathtt{RNC}$ benchmarks. In $(\ddagger)$,  bluSTL does not support multiple control modes. $(\P)$ is because bluSTL (at least its implementation available to us) does not support the until $\mathcal{U}$ modality.

Overall, our STLts is clearly the best performer in all  benchmarks but one. The other tools time out, or takes tens of seconds. For our motivation of \emph{illustrating} STL specs by trace synthesis in close interaction with users, tens of seconds is prohibitively long. The results adequately demonstrate satisfactory performance of our algorithm, in trace synthesis for complex STL specs.

\begin{table}[tbp]
\newcommand{\pie}[1]{%
    \begin{tikzpicture}[scale=.8, baseline=-3pt]
        \draw (0,0) circle (1ex);
        \fill (0,0) -- (90:1ex) arc (90:90-#1:1ex) -- cycle;
    \end{tikzpicture}%
}

\newcommand{\feature}[2][]{%
    \ifthenelse{\equal{#1}{nohl}}{}{%
        \ifthenelse{\equal{#2}{++}}{\hl}{}%
        \ifthenelse{\equal{#2}{+}}{\hl}{}%
    }%
    \ifthenelse{\equal{#2}{++}}{\pie{360}}{}%
    \ifthenelse{\equal{#2}{+}}{\pie{180}}{}%
    \ifthenelse{\equal{#2}{-}}{\pie{90}}{}%
    \ifthenelse{\equal{#2}{--}}{-}{}%
}

\caption{Comparison of our approach (STLts) with baselines (Breach, ForeSee, bluSTL, STLmc).
    Highlited cells represent positive features.}\label{table:comparison}
\resizebox{\textwidth}{!}{\begin{threeparttable}
\begin{tabular}{>{\raggedright}p{10em}|| >{\raggedright}p{9em}| >{\raggedright}p{9em}| >{\raggedright}p{9em} | >{\raggedright\arraybackslash}p{9em}}

Feature &
\bf STLts &
\bf Breach/ForeSee &
\bf bluSTL &
\bf STLmc \\
\hline

\bf Trace synthesis for analyzing specs  &
\feature{++}\; Successful in all benchmarks with large STL formulas &
\feature{+}\; Good for falsifying models but not good with large STL formulas &
\feature{--}\; Timeout in most of benchmarks &
\feature{-}\; Timeout except for linear dynamics \\
\hline

\bf Model checking  &
\feature{+}\; Complete up to $N$ and $\delta$  &
\feature{--}\;  &
\feature{+}\; Control synthesis with guarantee &
\feature{++}\; Complete up to $N$ \\
 \hline

\bf Parameter mining &
\feature{+}\; By MILP &
\feature{--}\;  &
\feature{+}\; By MILP &
\feature{-}\; By binary search \\
 \hline

\bf Continuous STL semantics &
\feature{++}\; Variable-interval encoding &
\feature{--}\; Discretized &
\feature{--}\; Discretized &
\feature{++}\; Variable-interval encoding \\
 \hline

\bf Accommodated class of nonlinear dynamics &
 MILP-encodable, can be nondeterministic &
Black-box, deterministic &
 MILP-encodable, can be nondeterministic &
 SMT-encodable, can be nondeterministic \\
 \hline

 \end{tabular}
 \begin{tablenotes}
    \item
    \feature{++} = full support; \feature{+} = partial support; \feature{-} = very limited support; \feature{--} = not supported.
    \end{tablenotes}    \end{threeparttable}
 } 
\end{table}

\myparagraph{RQ2:  Comparison with Other Approaches} A summary of comparison is in \cref{table:comparison}.
The comparison with optimization-based falsification tools is as we expected---their struggle with complex specs motivated this work (\cref{sec:intro}). Boolean connectives in STL specs have been found problematic in falsification:
this is called the \emph{scale problem}~\cite{ZhangHA19,ZhangLAMHZ21}. The results in \cref{table:experimentalResults} show that our benchmark specs are even beyond the capability of ForeSee, a tool that incorporates Monte Carlo tree search to specifically handle the scale problem. After all, one can say that falsification tools are aimed at complex \emph{models}, while our STLts is aimed at complex \emph{specs}.

STLmc has a similar (``dual'') scope and utilizes a similar technique (stable partitioning) to our STLts; the main difference is that STLmc is SMT-based while STLts is MILP-based. Therefore STLts accommodates a smaller class of models, but it can be faster on them exploiting numeric optimization. \cref{table:experimentalResults} suggests the advantage of STLts for common STL specs in manufacturing.

\myparagraph{RQ3: Performance in Real-World Scenarios} For this RQ, we refer to STLts's performance on the ISO benchmarks. Illustrating the specs $\mathtt{ISO}i$ by trace synthesis is a real-world problem about safety standards for automated driving (\cref{sec:intro}), and \cref{table:experimentalResults} shows that STLts has sufficient performance and scalability to handle complex specs there (see~\cref{eq:ISOcomplex}).

\myparagraph{RQ4: Performance in Parameter Mining}

\begin{wrapfigure}[10]{r}[0pt]{4.5cm}

    \vspace{-4em}

%
%
%
%
%

    \includegraphics[width=4.5cm]{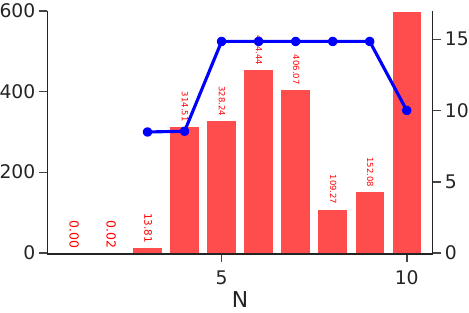}
    \vspace{-2.5em}

    \caption{STLts for parameter synthesis. Red is execution time (axis left, seconds); blue is the maximum $p$ (axis right). }
    \label{fig:paramSynthesis}
\end{wrapfigure}
\noindent
We conducted parameter mining experiments with the $\mathtt{ISO8}$ benchmark. Its specification
has a subformula $\mathtt{fasterThan}(SV,POV,p)$ that requires that SV's velocity is bigger than POV's by at least a parameter $p$. We used STLts to solve \cref{prob:existParMining}, that is, to find the maximum $p$ for which a satisfying trace exists.

\cref{fig:paramSynthesis} shows the results with varying variability bound $N$.
Parameter mining is generally more expensive than trace synthesis. This is because the former has a nontrivial objective function (namely $p$ in this example), while the latter does not (it is thus a constraint satisfaction problem).
We observe the optimization with $N \geq 10$ resulted in a timeout.
The tendency, much like in trace synthesis, is that the result (max $p$) improves but execution time gets larger as $N$ becomes bigger (there are some exceptions such as $N=8,9$ though).
Taking the same strategy as above (incrementing $N$), it takes roughly 10 minutes to obtain a largely converged value ($\sim 14.9$ for the maximum $p$). Overall, we believe this is a realistic performance for practical usage.


%
%

%
%
%

\bibliographystyle{splncs04}
\bibliography{export}

\appendix

\section{Further Details}
\label{sec:appendix}

\begin{definition}[STL (Boolean) semantics]\label{def:STLBoolSem}
    Let $\signalSymbol$ be a signal and  $\varphi$ be an STL formula, both over $V$. The \emph{satisfaction relation} $\signalSymbol\models\varphi$ between them is defined as follows; the semantics of the other operators are defined similarly~\cite{fainekosRobustnessTemporalLogic2009}.
    \begin{align*}
        \begin{array}{l}
            \BoolSat{\signalSymbol}{p} \;\Defiff \;
            \pi_p\bigl(\signalSymbol(0)\bigr)\ge 0
            \qquad
            (\BoolSat{\signalSymbol}{\bot} \text{ never holds})
            \qquad
            \\
            \BoolSat{\signalSymbol}{\neg \varphi} \;\Defiff\;  \signalSymbol\not\models\varphi                                                                \\
            \BoolSat{\signalSymbol}{\varphi_1 \wedge \varphi_2} \;\Defiff\; \BoolSat{\signalSymbol}{\varphi_1}\text{ and } \BoolSat{\signalSymbol}{\varphi_2} \\
            \BoolSat{\signalSymbol}{\varphi_1 \UntilOp{I} \varphi_2} \;\Defiff\;
            \exists t \in I.\,
            (\BoolSat{\signalSymbol^t}{\varphi_2} \;\land\; \forall t'\in [0,t).\,
            \BoolSat{\signalSymbol^{t'}}{\varphi_1})
            \\
            \BoolSat{\signalSymbol}{\varphi_1 \ReleaseOp{I} \varphi_2} \;\Defiff\;
            \forall t \in I.\,
            (\signalSymbol^t\not\models\varphi_2 \;\Rightarrow\; \exists t'\in [0,t).\,
            \BoolSat{\signalSymbol^{t'}}{\varphi_1})
        \end{array}
    \end{align*}
\end{definition}

\begin{definition}[STL robust semantics]\label{def:STLRobustSem}
    Let $\signalSymbol$ be a signal and  $\varphi$ be an STL formula, both over $V$. STL \emph{robust semantics} returns a quantity $\Robust{\signalSymbol}{\varphi} \in \R \cup\{\infty, -\infty\}$ that indicates the satisfaction level of $\signalSymbol$ to $\varphi$, defined as follows.
    \begin{align*}
        \begin{array}{l}
            \Robust{\signalSymbol}{p} \;\Defeq \;
            \pi_p\bigl(\signalSymbol(0)\bigr)
            \qquad
            \Robust{\signalSymbol}{\bot} \;\Defeq\; -\infty
            \qquad
            \Robust{\signalSymbol}{\neg \varphi} \;\Defeq\; - \Robust{\signalSymbol}{\varphi}                                                                \\
            \Robust{\signalSymbol}{\varphi_1 \wedge \varphi_2} \;\Defeq\; \min\big(\Robust{\signalSymbol}{\varphi_1}, \Robust{\signalSymbol}{\varphi_2}\big) \\
            \Robust{\signalSymbol}{\varphi_1 \UntilOp{I} \varphi_2} \;\Defeq\;
            {\sup_{t \in I}\left(\min\left(
            \,\Robust{\signalSymbol^t}{\varphi_2}, 
            \inf_{t'\in[0, t)} \Robust{\signalSymbol^{t'}}{\varphi_1}\,
            \right)
            \right)}                                                                                                                                         \\
            \Robust{\signalSymbol}{\varphi_1 \ReleaseOp{I} \varphi_2} \;\Defeq\;
            {\inf_{t \in I}\left(\max\left(
            \,\Robust{\signalSymbol^t}{\varphi_2}, 
            \sup_{t'\in[0, t)} \Robust{\signalSymbol^{t'}}{\varphi_1}\,
            \right)
            \right)}
        \end{array}
    \end{align*}
    The semantics of the other operators are defined similarly~\cite{fainekosRobustnessTemporalLogic2009}.
\end{definition}

It is well-known that,    by the quantitative robust semantics, one can infer the Boolean semantics:
if $\Robust{\signalSymbol}{\varphi}$ is positive, it implies that  $\signalSymbol\models \varphi$, and if $\Robust{\signalSymbol}{\varphi}$ is negative, it implies that $\signalSymbol\not\models \varphi$.

\begin{definition}[PSTL]\label{def:PSTL}
    Let $\vec{p}=(p_{1},\dotsc,p_{g})$ and $\vec{q}=(q_{1},\dotsc,q_{h})$ be vectors of syntactic parameters; those in $\vec{p}$ are called \emph{magnitude parameters} and those in $\vec{q}$ are \emph{timing parameters}.

    The syntax of \emph{parametric STL (PSTL)} is obtained by extending that of STL (\cref{def:stlSyntax}) as follows: 1) atomic propositions can also be in the form $\alpha :\equiv (f(\vec{w}) \geq  p_{i})$, having a magnitude parameter $p_{i}$ on the right-hand side instead of $0$; and 2) allowing a timing parameter $q_{j}$ as a bound of the interval $I=[a,b]$ that indexes a temporal operator $\UntilOp{I}$ or $\ReleaseOp{I}$  (i.e.\ $a$ or $b$ can be $q_{j}$, instead of a constant).

    Let $P\subseteq \Real^{g}$ and $Q\subseteq \Real^{h}$; these are the \emph{value domains} of the parameters $\vec{p},\vec{q}$.
    Let $\vec{u}=(u_{1},\dotsc,u_{g})\in P$ and $\vec{v}=(v_{1},\dotsc,v_{h})\in Q$ be vectors of real numbers from the domains. Given a PSTL formula $\varphi$, by replacing the occurrences of $p_{i}$ and $q_{j}$ with $u_{i}$ and $v_{j}$,  we obtain  an STL formula. It is denoted by $\varphi_{\vec{u},\vec{v}}$.
\end{definition}

The following is an easy consequence of \cref{def:finiteVariability}: $\mathcal{P}$ is obtained as a common refinement of partitions for subformulas.

    \begin{proposition}\label{prop:stablePartitioningExists}
        If  $\sigma$ is finitely variable with respect to $\varphi$, then there exists a stable partitioning $\mathcal{P}$ of any interval $D$ for $\sigma$ and $\varphi$. \qed
    \end{proposition}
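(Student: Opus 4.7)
The plan is to construct the desired stable partition as a common refinement of the partitions witnessing finite variability of each subformula, and then to reshape this refinement so that every piece is either a singleton or an open interval, as required by \cref{def:stablePartitioning}.

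First, I would invoke \cref{lem:finiteVariabilitySTL} to upgrade the hypothesis: finite variability of $\sigma$ with respect to $\varphi$ (via finite variability at the level of atomic propositions) yields the $N$-HBV with respect to $\varphi$ for some $N \in \mathbb{N}$. Hence, for every subformula $\psi \in \mathrm{Sub}(\varphi)$, there is a finite partition $\mathcal{P}_{\psi}$ of $[0,\infty)$ on whose intervals the Boolean signal $t \mapsto \sigma^{t} \models \psi$ is constant. Since $\mathrm{Sub}(\varphi)$ is finite and each $\mathcal{P}_{\psi}$ has finitely many intervals, the set $E \Defeq \bigcup_{\psi \in \mathrm{Sub}(\varphi)} \{\text{endpoints of } J : J \in \mathcal{P}_{\psi}\}$ is finite.

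Next, I would restrict to $D$ and enumerate the elements of $E \cap D$ together with the boundary points of $D$ in increasing order as $\gamma_{0} < \gamma_{1} < \dotsb < \gamma_{N}$. From this ordered sequence I form the partition of $D$ consisting of alternating singleton and open intervals $\{\gamma_{0}\}, (\gamma_{0},\gamma_{1}), \{\gamma_{1}\}, \dotsc, \{\gamma_{N}\}$, deleting a boundary singleton whenever $D$ is open at the corresponding endpoint. By construction, each element of this refined partition is singular or open, meeting the format constraint of \cref{def:stablePartitioning}.

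Finally, I would check stability subformula-by-subformula. Every interval $J$ of the refinement is contained in some interval $J_{\psi} \in \mathcal{P}_{\psi}$, because we refined along all endpoints of all $\mathcal{P}_{\psi}$; constancy of $t \mapsto \sigma^{t} \models \psi$ on $J_{\psi}$ therefore descends to $J$, for every $\psi \in \mathrm{Sub}(\varphi)$. The mildly subtle point, and the only step beyond routine bookkeeping, is bridging the ``finitely variable with respect to $\varphi$'' hypothesis to constancy at \emph{every} subformula; this is precisely what \cref{lem:finiteVariabilitySTL} provides, after which the common-refinement argument closes the proof.
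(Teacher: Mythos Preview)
Your approach---take the common refinement of the per-subformula partitions and then reshape it into the required alternation of singletons and open intervals---is exactly what the paper sketches; its entire proof is the one-line remark that ``$\mathcal{P}$ is obtained as a common refinement of partitions for subformulas.'' One small caveat on your bridging step: \cref{lem:finiteVariabilitySTL} characterizes $N$-HBV via finite variability at each \emph{atomic proposition}, not via finite variability at $\varphi$ itself, so your invocation of it does not literally upgrade the stated hypothesis; but the paper's hint is equally silent on this point, and once one reads the hypothesis as hereditary (as both you and the paper implicitly do), your argument goes through.
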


Under a stable partitioning $\mathcal{P}$ for $\sigma$ and $\varphi$, one can discretize $\sigma$ according to truth values indexed by  subformulas $\psi$ of $\varphi$ and intervals $J_i \in \mathcal{P}$.

\begin{definition}[\!\!{\cite[Definition 5]{baeBoundedModelChecking2019}}] 
    Let $\varphi$ be an STL formula, $\mathcal{P}$ be a partitioning of $[0, T]$ and  $\theta\colon \Sub(\varphi) \times \mathcal{P} \to \mathbb{B}$ be an assignment of Boolean values.
    A signal $\sigma: [0, T] \to \mathbb{R}^V$ \emph{matches} the pair  $(\mathcal{P}, \theta)$ if 1)  $\mathcal{P}$ is a stable partitioning for $\sigma$ and $\varphi$, and 2) for each $\psi\in\Sub(\varphi)$ and  $J\in\mathcal{P}$, we have $\theta(\psi, J) = \top$ if and only if  $\sigma^{t} \models \psi$ for each $t\in J$.
\end{definition}
By \cref{prop:stablePartitioningExists}, for any signal $\sigma$ that is finitely variable with respect to $\varphi$, there exists $(\mathcal{P}, \theta)$ that matches it.
We note that it is sufficient to decide the value of $\theta(p, \cdot)$ for atomic propositions $p \in \AP(\varphi)$ in order to identify $\theta$; the values for the other subformulas are then determined by the STL semantics.

\section{Shorthands for Propositional Connectives}\label{sec:shorthand}
We use the following shorthands in our  constraints, where $A,B$ are Boolean variables. They are standard in the MILP community; see e.g.~\cite{wolffOptimizationbasedTrajectoryGeneration2014}.
\begin{equation}\label{eq:MILPBooleanShorthand}
    \begin{array}{lll}
        Z = \lnot A   & \;\text{is short for}\quad\; & Z  = 1 - A                       \\
        Z = A \land B & \;\text{is short for}\quad\; & Z \le A, Z \le B, Z \ge A + B -1 \\
        Z = A \lor B
                      & \;\text{is short for}\quad\; &
        Z \ge A, Z \ge B, Z \le A + B                                                   \\
        A = 0 \;\Rightarrow\; f(x) \ge a
                      & \;\text{is short for}\quad\; &
        f(x) - a \ge M \cdot A                                                          \\
        A = 1 \;\Rightarrow\; f(x) \ge a
                      & \;\text{is short for}\quad\; &
        f(x) - a \ge M \cdot (A - 1)
    \end{array}
\end{equation}
We can also nest the shorthand expressions by introducing auxiliary variables.







Note that one can represent arbitrary nested relation by introducing auxiliary variable.
For example, $Z = A \land (B \lor C)$ is a shorthand notation for $Z = A \land X$ and $X = (B \lor C)$ with a new auxiliary variable $X$.

\section{Rectangular Hybrid Automata (RHAs)}\label{sec:rectHA}
RHAs~\cite{HenzingerKPV95} are restricted hybrid automata and are thus suited to analysis by LP. We briefly review its theory, restricting some definitions for our convenience. We refer to~\cite{HenzingerKPV95,HenzingerK97a} for further details.

Let $V$ be a set of (real-valued) variables. A \emph{rectangular predicate} over $V$ is one of the form $\bigwedge_{x\in V}x\in [a_{x},b_{x}]$ where $a_{x},b_{x}\in \EReal$  are real numbers or $\pm\infty$. We restrict to closed intervals for simplicity.

A \emph{rectangular hybrid automaton (RHA)} $\mathcal{H}$ over a set $V$ of variables is a hybrid automaton (HA) where 1) the flow dynamics at each control mode is described by a rectangular predicate $\bigwedge_{x\in V}\dot{x}\in [a_{x},b_{x}]$ over $\dot{V}\Defeq\{\dot{x}\mid x\in V\}$; 2) the invariant at each control mode is a rectangular predicate over $V$; and 3) each transition  between control modes is labeled with $(\mathsf{grd},\mathsf{update},\mathsf{post})$ where $\mathsf{grd}$ is a rectangular predicate over $V$, $\mathsf{post}$ is a rectangular predicate over $V'=\{x'\mid x\in V\}$ ($x'$ is ``$x$ after transition''); and $\mathsf{update}\subseteq V$ is a subset.

The transition labeled with $(\mathsf{grd},\mathsf{update},\mathsf{post})$ is enabled only when $\mathsf{grd}$ is true, and when it is taken, only the values of $x\in \mathsf{update}$ can be altered. The alteration of the values of $x\in \mathsf{update}$ is fully nondeterministic---the new values can be any reals---although the new values must satisfy $\mathsf{post}$.

The rest of the operational semantics is standard for HAs; this allows us to identify an RHA $\mathcal{H}$ with a system model $\mathcal{M}_{\mathcal{H}}\colon \Signal_{\emptyset}^T\to \pow(\Signal_{V}^T)$ in the sense of \cref{def:sysModelTraceSet}, restricting the domain of signals by some $T$.
RHAs have no input; thus the input variable set is $V'=\emptyset$. The nondeterminism of RHAs is reflected in $\pow$ in the type of $\mathcal{M}_{\mathcal{H}}$.

An example of an RHA is in our \emph{navigation} case study; see \cref{fig:example_nav}.

An encoding of RHAs to MILP is not hard.  For compatibility with the encoding of STL in \cref{sec:encodeSTL}, our encoding of RHAs is in a variable-interval style, too, where the  time domain $\Real_{\ge 0}$ is discretized into intervals $\dotsc, (\gamma_{i-1},\gamma_{i}),\{\gamma_{i}\}, (\gamma_{i},\gamma_{i+1}), \dotsc$. Here all $\gamma_{i}$'s are continuous MILP variables. Due to the restriction  to rectangular predicates---the slope of dynamics is bounded by constants, in particular---the operational semantics of RHAs can be exactly encoded to MILP.

\end{document}